\crefname{equation}{Eq.}{equations}
\crefname{section}{Sec.}{sections}
\crefname{figure}{Fig.}{figures}
\crefname{appendix}{Appendix}{appendices}
\crefname{table}{Table}{tables}
\theoremstyle{plain}
\newtheorem{theorem}{Theorem}
\newtheorem{lemma}[theorem]{Lemma}
\newtheorem{definition}[theorem]{Definition}
\crefname{theorem}{Theorem}{theorems}
\crefname{lemma}{Lemma}{lemmas}
\crefname{corollary}{Corollary}{corollaries}
\crefname{definition}{Definition}{definitions}
\begin{document}
\title{Large-scale stochastic simulation of open quantum systems} 

\author{Aaron Sander\,\orcidlink{0009-0007-9166-6113}}
\email{aaron.sander@tum.de}
\affiliation{Technical University of Munich, Munich, Germany}

\author{Maximilian Fröhlich\,\orcidlink{0009-0007-5276-2858}}
\email{froehlich@wias-berlin.de}
\author{Martin Eigel\,\orcidlink{0000-0003-2687-4497}}
\affiliation{Weierstrass Institute for Applied Analysis and Stochastics, Berlin, Germany}

\author{Jens Eisert\,\orcidlink{0000-0003-3033-1292}}
\affiliation{Freie Universität Berlin, Berlin, Germany}
\affiliation{Helmholtz-Zentrum Berlin für Materialien und Energie, Berlin, Germany}

\author{Patrick Gelß\,\orcidlink{0000-0002-3645-9513
}}
\affiliation{Freie Universität Berlin, Berlin, Germany}
\affiliation{Zuse Institute Berlin, Berlin, Germany}

\author{Michael Hintermüller\,\orcidlink{0000-0001-9471-2479}}
\affiliation{Weierstrass Institute for Applied Analysis and Stochastics, Berlin, Germany}

\author{Richard M.~Milbradt\,\orcidlink{0000-0001-8630-9356}}
\affiliation{Technical University of Munich, Munich, Germany}

\author{Robert Wille\,\orcidlink{0000-0002-4993-7860}}
\affiliation{Technical University of Munich, Munich, Germany}
\affiliation{Munich Quantum Software Company GmbH, Munich, Germany}
\affiliation{Software Competence Center Hagenberg GmbH (SCCH), Hagenberg, Austria}

\author{Christian B.~Mendl\,\orcidlink{0000-0002-6386-0230}}
\affiliation{Technical University of Munich, Munich, Germany}

%
%
%
%
%
%
%
%
%

\date{\today}

\begin{abstract}
%
Understanding the precise interaction mechanisms between quantum systems and their environment is crucial for advancing stable quantum technologies, designing reliable experimental frameworks, and building accurate models of real-world phenomena. However, simulating open quantum systems, which feature complex non-unitary dynamics, poses significant computational challenges that require innovative methods to overcome. In this work, we introduce the \emph{tensor jump method}  (TJM), a scalable, embarrassingly parallel algorithm for stochastically simulating large-scale open quantum systems, specifically Markovian dynamics captured by Lindbladians. This method is built on three core principles where, in particular, we extend the \emph{Monte Carlo wave function} (MCWF) method to matrix product states, use a dynamic \emph{time-dependent variational principle} (TDVP) to significantly reduce errors during time evolution, and introduce what we call a \emph{sampling MPS} to drastically reduce the dependence on the simulation's time step size. We demonstrate that this method scales more effectively than previous methods and ensures convergence to the Lindbladian solution independent of system size, which we show both rigorously and numerically. Finally, we provide evidence of its utility by simulating Lindbladian dynamics of XXX Heisenberg models up to a thousand spins using a consumer-grade CPU. This work represents a significant step forward in the simulation of large-scale open quantum systems, with the potential to enable discoveries across various domains of quantum physics, particularly those where the environment plays a fundamental role, and to both dequantize and facilitate the development of more stable quantum hardware.

\end{abstract}

\maketitle  

\section{Introduction}
Classical simulations are currently one of the most useful tools for comparing quantum theory with experimental results, as well as for providing benchmarks for the practical performance of quantum computers. Simulation results allow us to gain a deeper understanding of the underlying intricate physical mechanisms at work in complex quantum systems, as well as to build and scale more stable, reliable quantum computers and simulators. Quantum systems are notoriously difficult to simulate due to the exponential growth of parameters needed to represent larger systems. This growth leads to both increasing memory requirements to store exponentially many complex values and longer computational times to perform the operations needed to simulate their dynamics.

Tensor network methods, particularly \emph{matrix product states} (MPS) \cite{perez-garcia_matrix_2007,RevModPhys.93.045003,doi:10.1137/090752286}, are famously at the forefront of 
classical tools for quantum simulation \cite{CiracZollerSimulation,Trotzky}
as they facilitate the reduction 
of memory usage and computational time compared to 
a full state vector representation, especially in the case of manageable entanglement growth
\cite{SchuchQuench,AnalyticalQuench,Orus_2019,Schollwock_2011}. 
Rather than representing quantum systems, i.e., quantum states, 
as state vectors that grow exponentially in dimension with system size, MPSs enable the system to be split into a tensor train \cite{doi:10.1137/090752286} with each tensor representing a local system. This makes it possible to not only store the state of the quantum system with substantially fewer parameters, but also reduces many key components needed to simulate dynamics to local rather than global operations. This ansatz captures common correlation and entanglement patterns that present in quantum many-body systems well \cite{AreaReview,RevModPhys.93.045003}. This property has led to the development of the \emph{time-evolving block decimation} (TEBD) algorithm \cite{Vidal_2004} which is a critical simulation method for 
MPS dynamics. The caveat is that for increasingly entangled states, especially those generated during time evolution, the bond dimension between these tensors grows, with the upper bound becoming exponential as the system gets larger. The ability to truncate the bonds to much lower dimensions has been proven to be a successful approximation method that enables TEBD and MPS to be reliable as an accurate simulation method for local Hamiltonians \cite{Verstraete_2006}, even for highly-entangled states. However, truncation can break symmetries, violate energy conservation, and poorly approximate long-range interactions on fixed bond dimension manifolds. The \emph{time-dependent variational principle} (TDVP) \cite{Haegeman2011,PhysRevLett.100.130501, Haegeman_2016, Paeckel_2019,PhysRevLett.100.130501} addresses these issues by evolving quantum states in time through provably optimal MPS approximations with lower bond dimensions. TDVP thus has the advantage of preserving energy conservation and symmetries by avoiding TEBD's truncation errors, as well as not being confined to local interactions.



Substantial progress has been made in simulating quantum systems, such as finding
ground states and closed-system time evolution 
\cite{doi:10.1137/090752286, perez-garcia_matrix_2007,SchuchQuench,AnalyticalQuench,Verstraete_2006,RevModPhys.93.045003,Orus_2019, Schollwock_2011,Haegeman2011,Vidal_2004,AreaReview,PhysRevLett.100.130501, Haegeman_2016, Paeckel_2019,PhysRevLett.100.130501}
and, more recently, quantum circuits 
\cite{PhysRevX.10.041038,PhysRevResearch.3.023005,PRXQuantum.4.040326}.
However, simulating \emph{open} quantum systems remains comparatively underexplored \cite{PhysRevLett.93.207204,PhysRevLett.93.207205,Werner_2016,PhysRevA.92.052116,weimer_simulation_2021}, even though environmental effects such as relaxation, dephasing, and thermal excitations are critical to understanding and mitigating noise 
in quantum technologies. Importantly, these interactions lead to unstable quantum computing architectures; in fact, quantum noise is the core antagonist in quantum technologies. In order to build better quantum devices, we need to precisely understand these noise processes to mitigate the error that they cause.

Lindblad master equations \cite{Lindblad1976, Gorini1976} (or \emph{Lindbladians}) have long been the standard theoretical framework for modeling dissipative dynamics. This approach is the most general form of a completely positive and trace-preserving quantum dynamical semi-group. It evolves the density matrix $\rho(t) \in \mathbb{C}^{d^L \times d^L}$ directly, where $d$ is the physical dimension of one site and $L$ is the number of sites. Storing $\rho(t)$ as a dense matrix requires $\mathcal{O}(d^{2 L})$ computer memory and becomes intractable with growing system size. An alternative is the \emph{Monte Carlo wave function method} (MCWF) \cite{Molmer_92,PhysRevA.46.4382, Dalibard_1992, Carmichael_1993, Hudson_1984, Wiseman_Milburn_2009}, which stochastically simulates individual quantum trajectories of pure states to approximate the corresponding density matrix. Although this approach partially circumvents the complexity of large density matrices, the computational cost still grows exponentially with system size as it relies on state vectors $\ket{\Psi(t)} \in \mathbb{C}^{d^L}$.

Early work toward scalable simulations of open quantum systems using tensor networks focused on \emph{matrix product operator} (MPO) representations~\cite{PhysRevLett.93.207204, PhysRevLett.93.207205}. However, MPO-based approaches can suffer from positivity violations and numerical instabilities, especially when the resulting operators are not guaranteed to be positive semidefinite~\cite{UndecidableMPO}. Alternative methods benefit from parallelization and lower bond dimensions than MPO  approaches, though often at the cost of computational overhead ~\cite{Werner_2016}. A further distinction is typically made between approaches targeting full time evolution versus those computing steady states~\cite{PhysRevLett.114.220601}. While steady states may require only modest bond dimensions~\cite{PhysRevLett.111.150403}, accurate simulation of non-equilibrium dynamics often leads to rapid bond growth.

In parallel, stochastic unravelings of open quantum systems, including both quantum jump and quantum state diffusion approaches, have been explored using tensor network representations~\cite{PhysRevLett.127.230501, PhysRevA.82.063605, transchel2014montecarlotimedependentvariational, Purkayastha_2020, Purkayastha_2021}. These methods benefit from positivity by construction and natural parallelism across trajectories. Several recent works have successfully applied such techniques to electron-phonon dynamics~\cite{Moroder_2023} and photonic systems~\cite{Xie_2024}, employing Krylov and TDVP-based methods along with subspace truncation strategies. 

Our work builds most directly on the foundational ideas of Ref.~\cite{PhysRevA.82.063605}, and we acknowledge the importance of more recent studies reviewed in Refs.~\cite{DaleyReview, Jaschke_2018}. However, existing trajectory-based methods often suffer from practical numerical instabilities, particularly when using TDVP to evolve under a non-Hermitian effective Hamiltonian. These issues can hinder scalability, despite theoretical advantage from tensor networks.

To overcome these challenges, we introduce the \emph{tensor jump method} (TJM), a novel synthesis of quantum jump unravellings with \emph{matrix product state} (MPS) techniques that achieves numerically stable and scalable simulation of large open systems. The core innovations are as follows:

\begin{itemize}
    \item We introduce a \emph{second-order Trotterization} of the non-Hermitian effective Hamiltonian, which decouples the Hermitian and dissipative parts of the evolution and eliminates discretization errors caused by this step. This formulation enables us to retrieve quantum states at intermediate times with reduced time-step sensitivity.

    \item We develop a \emph{sampling MPS} that tracks the evolution under the reordered operator sequence, enabling extraction of trajectory states at arbitrary time steps without compromising the reduced Trotter error. This construction also allows the use of adaptive bond dimensions to efficiently represent long-time dynamics.
    
    \item We apply a \emph{dynamic time-dependent variational principle} sweep to the Hermitian part of the system Hamiltonian, which switches between one-site and two-site TDVP based on the local bond dimension rather than truncating, while treating the dissipative evolution through an exact site-local contraction. This avoids instabilities that arise when applying TDVP to a non-Hermitian generator, maintains robustness across time steps, and swaps truncation error for projection error.
\end{itemize}

Unlike traditional t-DMRG or TEBD methods~\cite{Daley_2004}, which rely on fixed two-site gates and truncation heuristics, our approach ensures provably optimal evolution constrained to the variational manifold and supports rigorous convergence. While quantum trajectory methods using MPS have been explored previously, our method stands out by providing a principled resolution to long-standing numerical bottlenecks, especially in regimes with strong dissipation or long simulation times.

At the foundational level, our approach leverages the stochastic \emph{Monte Carlo wavefunction} (MCWF) framework~\cite{PhysRevA.46.4382, Molmer_92, Dalibard_1992, Carmichael_1993, Hudson_1984, Wiseman_Milburn_2009}, which is well suited to large-scale parallelization and has deep relevance for benchmarking quantum simulators. Beyond benchmarking, our method also opens avenues for classical dequantization in open-system regimes~\cite{PRXQuantum.5.010308, NewStoudenmire}, extending the reach of classical simulation into parameter regimes previously believed to be intractable.

In summary, the TJM introduces a robust, convergent, and computationally efficient framework for simulating large-scale open quantum systems. While built upon established ingredients, its systematic integration of dynamic TDVP, higher-order unravelling, and stable sampling mechanisms positions it as a significant practical advance over prior trajectory-based MPS methods. These innovations fill critical gaps left by prior tensor network-based MCWF approaches, including ones using TDVP, enabling simulations at scale and with precision beyond the reach of earlier methods. Finally, we see this as a foundational work that can be further built on through new unravelling and splitting techniques \cite{PhysRevLett.133.230403, PhysRevLett.128.243601, PhysRevA.110.012207} and the addition of non-Markovian processes \cite{PhysRevA.97.012127}.
An open-source implementation of this work can be found in the \emph{MQT-YAQS} package available at~\cite{YAQS} as part of the \mbox{\emph{Munich Quantum Toolkit}} \cite{wille_mqt2024}.

\section{Simulation of open quantum systems}
\subsection{Lindbladian master equations} \label{sec:Preliminaries_Lindblad}
A large variety of processes in quantum physics can be captured by \emph{quantum Markov processes}. On a formal level, they are described by the Lindblad master equation (or Lindbladian) \cite{Lindblad_1976, Breuer_2007} 
\begin{equation}  \label{eq:Lindblad}
   \frac{d}{dt}\rho = -
   i [H_0,\rho] + \sum_{m=1}^k \gamma_m \left( L_m \rho L_m^\dag - \frac{1}{2} \{ L_m^\dag L_m, \rho \} \right).
\end{equation}
This gives rise to a dynamical semi-group that generalizes the Schr{\"o}dinger equation. The quantum state  $\rho$ and the Hamiltonian \( H_0 \) are Hermitian operators in the space of bounded operators \( \mathcal{B}(\mathcal{H}) \), acting on the Hilbert space of pure states of a quantum system $\mathcal{H}$. 
The first term 
$-i[H_0, \rho]$ corresponds to the unitary (closed/noise-free) time-evolution of $\rho$, where we have set $\hbar=1$ to simplify notation.
%
The summation captures the non-unitary dynamics of the system such that \( \{L_m\}_{m=1}^k \subset \mathcal{B}(\mathcal{H}) \) is a set of (non-unique) jump operators. These can be either Hermitian or non-Hermitian and correspond to noise processes in the system where \( \{\gamma_m\}_{m=1}^k \subset \mathbb{R}_+ \) is a set of coupling factors corresponding to the strength of each of these processes. These quantum jumps are defined as a sudden transition in the state of the system such as relaxation, dephasing, or excitation, which occur instantaneously, differentiating it from classical processes.

Directly computing the Lindbladian is one of the standard tools for exactly simulating open quantum systems \cite{Manzano_2019}. However, its dependence on the dimension of the operator $\rho \in \mathbb{C}^{d^L \times d^L}$ limits the numerical computation of non-unitary dynamics to small system sizes This highlights the need for substantially more scalable simulation methods of large-scale open quantum systems. 


\subsection{Monte Carlo wave function method}
To decouple the simulation of open quantum systems from the scaling of the density matrix, the Lindbladian can be simulated stochastically using the \emph{Monte Carlo wave function} method (MCWF) \cite{Molmer_92, Dalibard_1992, Carmichael_1993, Hudson_1984, Wiseman_Milburn_2009}. The MCWF breaks the dynamics of an open quantum system into a series of trajectories that represent possible evolutions of a time-dependent pure quantum state vector $\ket{\Psi(t)} \in \mathbb{C}^{d^L}$. Each trajectory corresponds to a non-unitary time evolution of the quantum state followed by stochastic application of quantum jumps, which leads to a sudden, non-continuous shift in its evolution. By averaging over many of these trajectories, the full non-unitary dynamics of the system can be approximated without directly solving the Lindbladian. More formally speaking, it gives rise to a stochastic process in projective Hilbert space that, on average, reflects the quantum dynamical semi-group.

Using the same jump operators as in \cref{eq:Lindblad}, a non-Hermitian Hamiltonian can be constructed as
\begin{equation} \label{eq:EffectiveHamiltonian}
    H = H_0 + H_D,
\end{equation}
consisting of the system Hamiltonian \(H_0\) and dissipative Hamiltonian defined as
\begin{equation} \label{eq:NonHermitianHamiltonian}
    H_D = -\frac{i
    }{2} \sum_{m=1}^k \gamma_m L_m^\dag L_m. 
\end{equation}
Formally, this non-Hermitian Hamiltonian defines a time-evolution operator
 \begin{equation} \label{eq:TimeEvolutionOperator}
    U(\delta t) = e^{-i H \delta t},
 \end{equation}
which can be used to evolve the state vector as
 \begin{equation}
    \ket*{\Psi^{(i)}(t + \delta t)} = e^{-i H \delta t} \ket{\Psi(t)},
 \end{equation}
where $H$ acts as an effective Hamiltonian in the time-dependent Schrödinger equation. The superscript $(i)$ denotes the initial time-evolved state vector, which is not yet stochastically adjusted by the jump application process. 
Additionally, the dissipation caused by this operator does not preserve the norm of the state.

For the purpose of the MCWF derivation, the time evolution can be represented by the first-order approximation of the matrix exponential, where all $\mathcal{O}(\delta t^2)$ terms are dropped in the MCWF method \cite{Molmer_92}
\begin{equation}
    \braket*{\Psi^{(i)}(t + \delta t)} = 1 - \delta p(t).
\end{equation}
The denormalization $\delta p(t)$ can be used as a stochastic factor to determine if any jump has occurred in the given time step where it can be seen as a summation of individual stochastic factors corresponding to the denormalization caused by each noise process
\begin{equation}
    \delta p_m(t) = \delta t \gamma_m \expval{L^\dagger_m L_m}{\Psi(t)}, \ m=1, \dots, k,
\end{equation}
such that
\begin{equation}
    \delta p(t) = \sum_{m=1}^k \delta p_m(t).
\end{equation}

A random number $\epsilon$ is then sampled uniformly from the interval $[0, 1]$
and compared with the magnitude of $\delta p$. If $\epsilon \geq \delta p$, no jump occurs and the initial time-evolved state is normalized before moving onto the next time step
\begin{equation}
    \ket{\Psi(t + \delta t)} = \frac{\ket*{\Psi^{(i)}(t + \delta t)}}{\sqrt{\braket*{\Psi^{(i)}(t + \delta t)}}} = \frac{\ket*{\Psi^{(i)} (t + \delta t)}}{\sqrt{1 - \delta p(t)}}.
\end{equation}
If $\epsilon < \delta p$, a probability distribution of the possible jumps at the given time is created by
\begin{equation}
    \Pi(t) = \{ \Pi_1(t), \dots, \Pi_k(t) \}
\end{equation}
with the normalized stochastic factors $\Pi_m(t) = \delta p_m(t) / \delta p(t)$ such that $\sum_{m=1}^k \Pi_m(t) = 1$.
A jump operator $L_m$ is then selected according to this probability distribution and applied directly to the pre-time-evolved quantum state
\begin{equation}
    \ket{\Psi(t + \delta t)} = \frac{\sqrt{\gamma_m}L_m \ket{\Psi(t)}}{\sqrt{\expval{L^\dagger_m \gamma_m L_m}{\Psi(t)}}} = \frac{\sqrt{\gamma_m}L_m \ket{\Psi(t)}}{{\sqrt{\delta p_m(t) / \delta t}}},
\end{equation}
simulating that the dissipative term has caused a jump during this time step.
This methodology is then repeated until the desired elapsed time $T$ is reached, resulting in a single quantum trajectory and a final state vector  $\ket{\Psi(T)}$. This process, which can be seen in \cref{fig:MCWF_and_TJM}(b), gives rise to a stochastic process in 
projective Hilbert space called a piece-wise deterministic stochastic process in the limit $\delta t\rightarrow 0$.

%

This stochastic process is not only a classical Markovian stochastic process: In expectation, it exactly recovers the Markovian quantum process that is described by the dynamical semi-group generated by the master equation in Lindblad form
\begin{equation}
    \rho(t) = \lim_{\delta t \rightarrow 0} \mathbb{E}\left( \ket{\Psi_j(t)} \bra{\Psi_j(t)} \right).
\end{equation}
For $N$ trajectories, the quantum state $\rho(t)$ at time $t$ 
is estimated as
\begin{equation}
    \bar{\mu}(t) = \frac{1}{N} \sum_{j=1}^{N} \ket{\Psi_j(t)} \bra{\Psi_j(t)}.
    \label{eq:MCWFConvergence}
\end{equation}
Details of this observation can be 
found in \cref{appendix:QJM-Lindblad}.
%
%
Rather than directly calculating the quantum state, the trajectories of state vectors can be stored and manipulated individually to calculate relevant expectation values.
While this does not solve the exponential scaling with system size, it does provide a computational advantage over exactly calculating the quantum state in form of the density operator $\rho(t)$.

\begin{figure*}[]
    \centering

    \begin{minipage}[t]{0.6\linewidth}
        \centering
        \includegraphics[width=\linewidth]{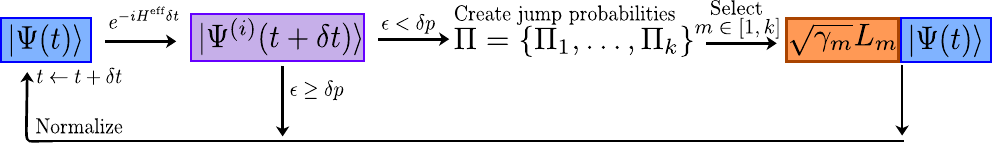}
        \vspace{1ex}
        \textbf{(a)}
    \end{minipage}
    \hfill
    \begin{minipage}[t]{0.37\linewidth}
        \centering
        \includegraphics[width=\linewidth]{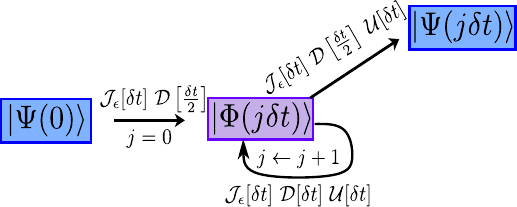}
        \vspace{1ex}
        \textbf{(b)}
    \end{minipage}

    \caption{Algorithms of the \emph{Monte Carlo wave function} (MCWF) method and the \emph{tensor jump method} (TJM).
    \textbf{(a)} The MCWF algorithm stochastically simulates quantum trajectories by evolving a wavefunction \(\ket{\Psi(t)}\) with a non-Hermitian effective Hamiltonian \(e^{-i H^{\text{eff}} \delta t}\). A stochastic norm loss \( \delta p = 1 -  \langle \Psi^{(i)}(t+\delta t)|\Psi^{(i)}(t+\delta t)\rangle \) is compared against a uniform random number \(\epsilon \in [0,1]\). If \(\epsilon \geq \delta p\), the result of the process is the dissipated state $|\Psi^{(i)}(t+\delta t)\rangle$. Otherwise, a quantum jump operator \(L_j\) is applied based on computed jump probabilities to the pre-evolved state. The output of this stochastic process is then normalized. This process is repeated iteratively to simulate dissipation until the final time \(T\).
    \textbf{(b)} The TJM extends this idea to tensor networks. A sampling MPS \(\ket{\Phi(j\delta t)}\) is evolved by alternating applications of the dissipative sweep \(\mathcal{D}\), a stochastic jump process \(\mathcal{J}_\epsilon\), and dynamic TDVP \(\mathcal{U}\). At any point, the quantum trajectory \(\ket{\Psi(j \delta t)}\) can be sampled from $\Phi$ by applying a final sampling layer. The method retains the structure of MCWF while enabling low timestep error from second-order Trotterization without additional overhead. Further details of the operators \(\mathcal{U}\), \(\mathcal{D}\), and \(\mathcal{J}_\epsilon\) are provided in \cref{sec:TJM}.
    }
    \label{fig:MCWF_and_TJM}
\end{figure*}

\section{Tensor jump method} \label{sec:TJM}
While the MCWF improves the scalability of simulating open quantum systems 
in comparison to the exact Lindbladian computation, it still is bounded by the exponential scaling of state vectors.
In conjunction with the success of MPS in simulating quantum systems, this provides an opportunity to extend the MCWF to a tensor network-based method which motivates this work.
While directly solving the Lindbladian with MPOs has been explored \cite{Werner_2016, Godinez2024}, approaching this problem stochastically has only been examined for relatively small systems \cite{PhysRevLett.127.230501, PhysRevA.82.063605,transchel2014montecarlotimedependentvariational, Purkayastha_2020, Purkayastha_2021, Moroder_2023, Xie_2024}.

In this section, we introduce the components required for the \emph{tensor jump method} (TJM). We first provide a general overview of the steps needed for the algorithm, without justification or explicit details, before walking the reader through the individual steps afterwards. We present the TJM with the aim to tackle the simulation of open quantum systems by designing every step to reduce the method's error and computational complexity as much as possible to maximize scalability. 

\subsection{General mindset}
The general inspiration for the TJM is to transfer the MCWF to a highly efficient tensor network algorithm in which an MPS structure can be used to represent the trajectories, from which we can calculate the density operator or expectation values of observables.
The stochastic time-evolution of one trajectory in the TJM consists of three main elements:
\begin{enumerate}
    \item A dynamic TDVP $\mathcal{U}[\delta t]$.
    \item A dissipative contraction $\mathcal{D}[\delta t]$.
    \item A stochastic jump process $\mathcal{J}_{\epsilon}[\delta t]$.
\end{enumerate}
The steps of the TJM described in this section are depicted in \cref{fig:MCWF_and_TJM}(a) and defined in \cref{sec:Dynamic_TDVP}--\ref{sec:jumpprocess}.

We begin with an initial state vector \( \ket{\boldsymbol{\Psi}(0)} \) at \( t = 0 \) encoded as an MPS. We wish to evolve this from some time $t \in [0, T]$ ($T = n \delta t \ (n > 0)$) according to a Hamiltonian $\boldsymbol{H_0}$, where we use bold font whenever the Hamiltonian is encoded as an MPO, along with some noise processes described by the set of jump operators $\{L_m\}_{m=1}^k$. Using the above elements, we can express the time-evolution operator $U(T)$ of one trajectory of the TJM as
\begin{equation}
    U(T) = \prod_{i=0}^{n} \mathcal{F}_{n-i}[\delta t].
\end{equation}
which consists of $n$ subfunctions corresponding to each time step
\begin{equation}
    \mathcal{F}_j [\delta t] = \begin{cases}
            	\mathcal{J}_{\epsilon}[\delta t] \ \mathcal{D} \left[ \frac{\delta t}{2} \right] \ \mathcal{U}[\delta t], & j = n, \\ \\
                    \mathcal{J}_{\epsilon}[\delta t] \ \mathcal{D}[\delta t] \ \mathcal{U}[\delta t], & 0 < j < n, \\ \\
                    \mathcal{J}_{\epsilon}[\delta t] \ \mathcal{D} \left[ \frac{\delta t}{2} \right], & j = 0.
                    \end{cases}
    \label{eq:TimeEvolutionFunctions}
\end{equation}
This set of subfunctions follows from higher-order Trotterization used to reduce the time step error (see \cref{sec:Trotterization}).
However, this Trotterization causes the unitary evolution to lag behind the dissipative evolution by a half-time step, which is only corrected when the final operator \( \mathcal{F}_n [\delta t] \) is applied.


To maintain the reduced time step error while being able to sample at each time step $t = 0, \delta t, 2 \delta t, \dots, T$ during a single simulation run, we introduce what we call a 
\emph{sampling MPS} (denoted by $\boldsymbol{\Phi}$ while the quantum state itself is $\boldsymbol{\Psi}$.)
This is initialized by the application of the first subfunction to the quantum state vector
\begin{equation}
    \ket{\boldsymbol{\Phi}(0)} = \mathcal{F}_0 [\delta t]  \ket{\boldsymbol{\Psi}(0)}.
\end{equation}
We use this to iterate through each successive time step
\begin{equation}
    \ket{\boldsymbol{\Phi} \bigl( (j+1) \delta t \bigr)} = \mathcal{F}_j [\delta t] 
 \ket{\boldsymbol{\Phi}(j \delta t)}.
\end{equation}
At any point during the evolution, we can retrieve the quantum state vector \( \ket{\boldsymbol{\Psi}(j \delta t)} \) by applying the final function \( \mathcal{F}_n \) to the sampling MPS as
\begin{equation}
    \ket{\boldsymbol{\Psi}(j \delta t)} = \mathcal{F}_n [\delta t] \ket{\boldsymbol{\Phi}(j \delta t)}
    \label{eq:SimulationStep}.
\end{equation}
This allows us to sample at the desired time steps without compromising the reduction in time step error from applying the operators in this order. 

We then repeat this time-evolution for $N$ trajectories from which we can reconstruct the density operator in MPO format $\boldsymbol{\rho}(t)$ according to \cref{eq:MCWFConvergence} 
\begin{equation}
    \boldsymbol{\rho}(t) = \frac{1}{N} \sum_{i=1}^N \ket{\boldsymbol{\Psi}_i(t)}\bra{\boldsymbol{\Psi}_i(t)},
\end{equation}
or, more conveniently, we can calculate the expectation values of observables $\boldsymbol{O}$ (stored as tensors or an MPO denoted by the bold font) for each individual trajectory
\begin{equation}
    \langle \boldsymbol{O}(t) \rangle = \frac{1}{N} \sum_{i=1}^N \bra{\boldsymbol{\Psi}_i(t)} \boldsymbol{O} \ket{\boldsymbol{\Psi}_i(t)}.
    \label{eq:TJMExpVal}
\end{equation}
This results in an embarassingly parallel process since each trajectory is independent and may be discarded after calculating the relevant expectation value.

\subsection{Higher-order Trotterization} \label{sec:Trotterization}
This section explains the steps to derive the subfunctions from \cref{eq:TimeEvolutionFunctions} and provides our justification for doing so.
We first define a generic non-Hermitian Hamiltonian created by the system Hamiltonian $H_0$ and the dissipative Hamiltonian $H_{D}$ exactly as in \cref{eq:EffectiveHamiltonian}
\begin{equation}
    H = H_0 + H_{D}.
\end{equation}
From this, we create the non-unitary time-evolution operator that forms the basis of our simulation
\begin{equation}
    U(\delta t) = e^{-i (H_0 + H_{D}) \delta t}.
    \label{eq:TimeEvolutionOperator_Original}
\end{equation}
In many quantum simulation use cases, including the derivation of the MCWF in \cref{eq:TimeEvolutionOperator}, this operator would be split according to the first summands of the matrix exponential definition or Suzuki-Trotter decomposition \cite{trotter_product_1959, suzuki_generalized_1976, Lloyd1996, PhysRevA.51.3302, Vidal_2004}. However, higher-order splitting methods exist, which exhibit lower error \cite{Mger2019NotesOT, Gradinaru2007, Jahnke_2000, Lubich_2008}. While this comes at the cost of computational complexity, we show that in this case Strang splitting \cite{Lubich_2008} (or second-order Trotter splitting) can be utilized to reduce the time step error from $\mathcal{O}(\delta t^2) \rightarrow \mathcal{O}(\delta t^3)$ with negligible change in computational time.

Applying Strang splitting, the time-evolution operator becomes
\begin{equation}
\label{eq:TimeEvoOperatorSplit}
    U^{(i)}(\delta t) = e^{-i H_{D} \frac{\delta t}{2}} e^{-i H_{0} \delta t} e^{-i H_{D} \frac{\delta t}{2}} + \mathcal{O}(\delta t^3),
\end{equation}
where superscript $(i)$ denotes the initial time-evolution operator, which is not yet stochastically adjusted by the jump application process.
These individual terms then define the dissipative operator
\begin{equation}
    \mathcal{D}[\delta t] = e^{-i H_D \delta t}
    \label{eq:DissipativeOperation}
\end{equation}
and the unitary operator
\begin{equation}
    \mathcal{U}[\delta t] = e^{-i H_0 \delta t}.
\end{equation}
For a time-evolution from $t \in [0, T]$ with terminal $T = n \delta t$ for $n$ time steps, the time-evolution operator takes the form
\begin{equation}
    \begin{split}
    U^{(i)}(T) & = \left( \mathcal{D} \left[ \frac{\delta t}{2} \right] \ \mathcal{U}[\delta t] \ \mathcal{D}\left[ 
    \frac{\delta t}{2} \right] \right)^n \\
    & = \mathcal{D} \left[ \frac{\delta t}{2} \right] \ \mathcal{U}[\delta t] \ \Bigl( \mathcal{D}[\delta t] \ \mathcal{U}[\delta t] \Bigr)^{n-1} \  \mathcal{D} \left[ \frac{\delta t}{2} \right].
    \end{split}
    \label{eq:MPSTimeEvolution}
\end{equation}
Here, we have combined neighboring half time steps of dissipative operations, which is valid since $H_D$ commutes with itself for any choice of jump operators.

Note that this then takes the same form as the functions in \cref{eq:TimeEvolutionFunctions} although without the stochastic operator $\mathcal{J}_{\epsilon}[j\delta t]$, leading to the initial time-evolution functions
\begin{equation}
    \mathcal{F}^{(i)}_j [\delta t] = \begin{cases}
            	    \mathcal{D} \left[ \frac{\delta t}{2} \right] \ \mathcal{U}[\delta t], & j = n, \\ \\
                     \mathcal{D}[\delta t] \ \mathcal{U}[\delta t], & 0 < j < n, \\ \\
                   \mathcal{D} \left[ \frac{\delta t}{2} \right], & j = 0,
                    \end{cases}
    \label{eq:IdealTimeEvolutionFunctions}
\end{equation}
where
\begin{equation}
  \mathcal{F}_j [\delta t] = \mathcal{J}_{\epsilon}[\delta t] \ \mathcal{F}^{(i)}_j [\delta t], \ \forall j.
\end{equation}

\subsection{Dynamic TDVP}\label{sec:Dynamic_TDVP}
For the unitary time-evolution \( \mathcal{U} \), we employ a \emph{dynamic TDVP} method.
This is a combination of the \emph{two-site TDVP} (2TDVP) and \emph{one-site TDVP} (1TDVP) such that during the sweep, we allow the bond dimensions of the MPS to grow naturally up to a bond dimension  \( \chi_{\text{max}} \), before confining the evolution to the current manifold, effectively capping the bond dimension and ensuring computational feasibility for the remainder of the simulation. More precisely, during each sweep, we locally use 2TDVP if the bond dimension has room to grow, otherwise we use 1TDVP at the site. Note that this means the maximum bond dimension is when the dynamic TDVP switches to the local 1TDVP, rather than an absolute maximum seen in truncation, so some local bonds may be higher.
This dynamic approach enables the necessary entanglement growth in the early stages while controlling the computational cost later on, making optimal use of available resources. In this section, we define a sweep as two half-sweeps, one from $\ell \in [1, \dots, L]$ and back for $\ell \in [L, \dots, 1]$. For a time step $\delta t$ this leads to two half-sweeps of $\frac{\delta t}{2}$.

\begin{figure}
    \centering
        \centering
        \includegraphics[width=.7 \linewidth]{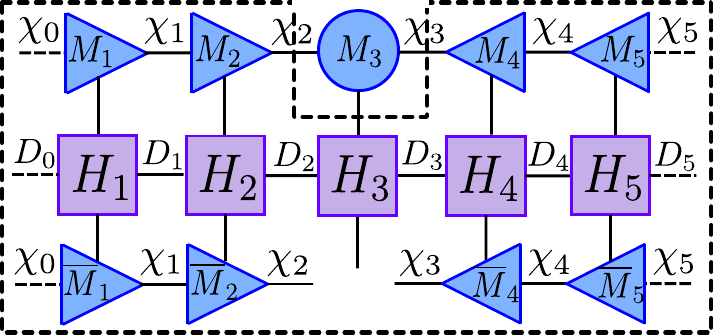}
        \label{fig:TDVP_ForwardEvolving}
    \par\bigskip
        \centering
        \includegraphics[width=.8\linewidth]{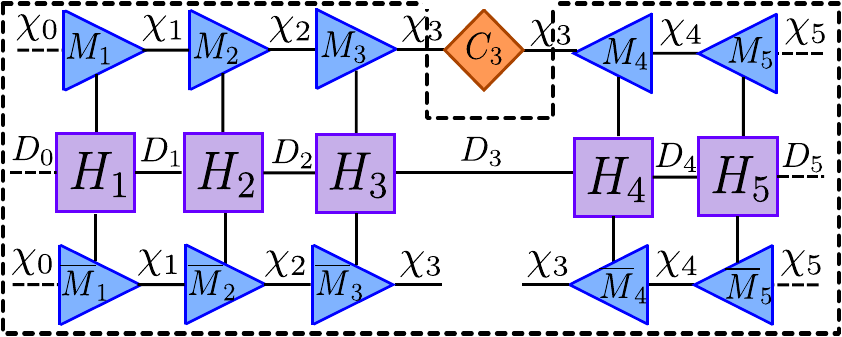}
        \label{fig:TDVP_BackwardEvolving}
    \par\bigskip
        \centering
        \includegraphics[width=.7\linewidth]{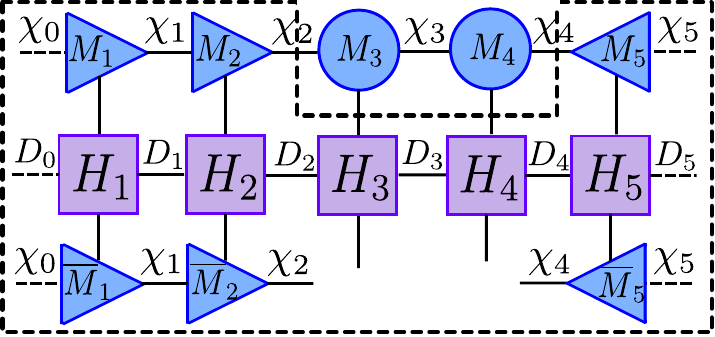}
        \label{fig:2TDVP_ForwardEvolving}
        \caption{This figure visualizes each step of the various forward- and backward-evolving equations of 1TDVP and 2TDVP indicated by the black dashed line. This shows the reduced practical form of the network caused by the MPS's mixed canonical form. The tensors surrounded by the dashed lines (corresponding to $H^{\text{eff}}$) are contracted, exponentiated with the Lanczos method, then applied to the remaining tensors to update them.
        \textbf{(Top)} 1TDVP forward-evolving and 2TDVP backward-evolving network where $H^\text{eff}_3$ ($\tilde{H}^\text{eff}_{3,4}$) is a degree-6 tensor.
        \textbf{(Middle)} 1TDVP backward-evolving network where $\tilde{H}^\text{eff}_3$ is a degree-4 tensor.
        \textbf{(Bottom)} 2TDVP forward-evolving network where $H^\text{eff}_{3,4}$ is a degree-8 tensor.
        }
    \label{fig:TDVP}
\end{figure}

\subsubsection{1TDVP and 2TDVP as tensor networks}
To implement this approach, we express the state vector as a partitioning around one of its site tensors
\begin{equation}
    \ket{\boldsymbol{\Phi}}  = \ket*{\boldsymbol{\Phi}^L_{\ell-1}} \otimes M_\ell \otimes \ket*{\boldsymbol{\Phi}^R_{\ell+1}}.
\end{equation}
By fixing the MPS to a mixed canonical form at site $ \ell $ and applying the conjugate transpose of the partitioned single-site map $\ket*{\boldsymbol{\Phi}^L_{\ell-1}} \otimes \ket*{\boldsymbol{\Phi}^R_{\ell+1}}$ to each side of \cref{eq:1TDVP_ForwardODE}, the forward-evolving ODEs simplify to $L$ local ODEs of the form
\begin{equation}
    \frac{d}{dt} M_{\ell} (t)= -i  H^{\text{eff}}_\ell M_\ell(t), \quad \ell=1, \dots, L,
\end{equation}
with a local effective Hamiltonian \( H^{\text{eff}}_\ell \), which dictates how to evolve each site tensor \( M_\ell \) of the MPS. This process and the tensor network representation of \( H^{\text{eff}}_\ell \) is visualized in \cref{fig:TDVP}.

Once \( H^{\text{eff}}_\ell \) is computed, we matricize and exponentiate it using the Lanczos method \cite{Lanczos_1950}. After applying the Lanczos method, the site tensor is vectorized and updated according to the solution
\begin{equation}
    M_\ell(t+\delta t) = e^{-i H^{\text{eff}}_{\ell} \delta t} M_\ell(t).
    \label{eq:1TDVP_ForwardEvolving}
\end{equation}
A QR decomposition is then applied to shift the orthogonality center from \( \ell 
\mapsto \ell+1 \) resulting in \( M_\ell = \tilde{M}_\ell C_\ell \), where \( \tilde{M}_\ell \) replaces the previous site tensor and creates an updated MPS \( \ket*{\boldsymbol{\tilde{\Phi}}} \).
The bond tensor \( C_\ell \) then evolves according to the simplified backward-evolving ODE, which is obtained by multiplying the conjugate transpose of $\ket*{\boldsymbol{\Phi}^L_{\ell}} \otimes \ket*{\boldsymbol{\Phi}^R_{\ell+1}}$ into each side of \cref{eq:1TDVP_BackwardODE}
\begin{equation*}
    \frac{d}{dt} C_\ell(t) = +i \tilde{H}^{\text{eff}}_\ell C_\ell(t), \quad \ell=1, \dots, L-1,
\end{equation*}
with an effective Hamiltonian \( \tilde{H}^{\text{eff}}_\ell \) as shown in \cref{fig:TDVP}. Again, the Lanczos method is used to update the bond tensor
\begin{equation}
    C_\ell(t + \delta t) = e^{+i \tilde{H}^{\text{eff}}_\ell \delta t} C_\ell(t),
    \label{eq:1TDVP_BackwardEvolving}
\end{equation}
after which it is contracted along its bond dimension with the neighboring site \( C_\ell(t+\delta t) M_{\ell+1}(t) \) to continue the sweep.

In the 2TDVP scheme, the process is modified by extending \cref{eq:ProjectorSummation} to sum over two neighboring sites and adjusting \cref{eq:1TDVP_Krylov} and \cref{eq:1TDVP_Filter} to act on both sites \( \ell \) and \( \ell+1 \)
\begin{equation}
    K_{\ell, \ell+1} = \ket*{\boldsymbol{\Phi}_{\ell-1}^L} \bra*{\boldsymbol{\Phi}_{\ell-1}^L} \otimes I_\ell \otimes I_{\ell+1} \otimes \ket*{\boldsymbol{\Phi}_{\ell+2}^R} \bra*{\boldsymbol{\Phi}_{\ell+2}^R},
    \label{eq:2TDVP_Krylov}
\end{equation}
and
\begin{equation}
    F_{\ell, \ell+1} = \ket*{\boldsymbol{\Phi}_{\ell-1}^L} \bra*{\boldsymbol{\Phi}_{\ell-1}^L} \otimes I_\ell \otimes \ket*{\boldsymbol{\Phi}_{\ell+1}^R} \bra*{\boldsymbol{\Phi}_{\ell+1}^R}.
    \label{eq:2TDVP_Filter}
\end{equation}
This results in the equations to update two tensors simultaneously
\begin{equation}
    M_{\ell, \ell+1}(t+\delta t) = e^{-i H^{\text{eff}}_{\ell, \ell+1} \delta t} M_{\ell, \ell+1}(t),
    \label{eq:2TDVP_ForwardEvolving}
\end{equation}
and
\begin{equation}
    C_{\ell, \ell+1}(t + \delta t) = e^{+i \tilde{H}^{\text{eff}}_{\ell, \ell+1} \delta t} C_{\ell, \ell+1}(t),
    \label{eq:2TDVP_BackwardEvolving}
\end{equation}
where the effective Hamiltonians are defined as the tensor networks in \cref{fig:TDVP}. Note that the 1TDVP forward-evolution and the 2TDVP backward-evolution are functionally the same with a different prefactor in the exponentiation since $C_{\ell, \ell+1} = M_\ell$.

Compared to 1TDVP, this operation requires contraction of the bond between $M_\ell$ and $M_{\ell+1}$. After the time-evolution of the merged site tensors, a \emph{singular value decomposition} (SVD) is applied with some threshold $s_{\text{max}}$ such that the bond dimension $\chi_\ell$ can be updated and allowed to grow to maintain a low error. 

Practically, this results in a DMRG-like \cite{Schollwock_2011} algorithm since TDVP reduces to a spatial sweep across sites for all $\ell = 1, \dots, L$, where at each site (or pair of sites) we alternate between a forward-evolving update to a given site tensor, followed by a backward-evolving update to its bond tensor.

\subsubsection{Hybrid strategy}



During the 1TDVP and 2TDVP sweeps, we compute the effective Hamiltonians using left and right environments which are updated and reused throughout the evolution \cite{Paeckel_2019}.
Additionally, to effectively compute the matrix exponential, we apply the Lanczos method with a limited number of iterations \cite{Lanczos_1950}, which significantly speeds up the computation of the exponential for large matrices, particularly when the bond dimensions grow large. Both of these are essential procedures to ensure that the TJM is scalable.

To summarize, the described algorithm allows us to define the unitary time-evolution operator $\mathcal{U}[\delta t]$ as a piece-wise conditional with error in $\mathcal{O}(\delta t^3)$ since TDVP is a second-order method \cite{Lubich_2008}, given by
\onecolumngrid
\begin{equation}
\mathcal{U}[\delta t] =
    \begin{cases}
     \ \prod_{\ell=1}^{L-2} \left(  e^{-i H^{\text{eff}}_{\ell, \ell+1} \delta t}e^{+i \tilde{H}^{\text{eff}}_{\ell, \ell+1} \delta t}\right) \ e^{-i H^{\text{eff}}_{L-1, L} \delta t}  &: \chi_\ell < \chi_{\text{max}}, \\ \\
     \prod_{\ell=1}^{L-1} \left( e^{-i H^{\text{eff}}_{\ell} \delta t} e^{+i \tilde{H}^{\text{eff}}_{\ell} \delta t} \right) \ e^{-i H^{\text{eff}}_L \delta t} & : \chi_\ell \geq \chi_{\text{max}}.
    \end{cases}
\end{equation}
\twocolumngrid

\subsection{Dissipative contraction}\label{sec:Dissipative_Contraction}
The dissipative operator $\mathcal{D}[\delta t]$ is created by exploiting the structure of the exponentiation of local jump operators in \cref{eq:DissipativeOperation}. This operation can be factorized into purely local operations due to the commutativity of the sums of single site operators. As a result, the dissipation term is equivalent to a single contraction of the dissipative operator $\mathcal{D}[\delta t]$ into the current MPS $\ket{\boldsymbol{\Phi}(t)}$. Additionally, this does not increase the bond dimension when applied to an MPS and the dissipative contraction is exact \emph{without inducing errors}. This contraction is visualized in \cref{fig:TJM_Steps}.

In this work, we focus on single-site jump operators $ L_m \in \mathbb{C}^{d^L \times d^L} $ where each $ L_m $ is a tensor product of identity matrices $ I \in \mathbb{C}^{d \times d} $ and a local non-identity operator $L_m^{[\ell]} \in \mathbb{C}^{d \times d} $ acting on some site $\ell=1, \dots, L$. Specifically, for each $m =1, \dots, k$, the operator $L_m$ can be written as
\begin{equation}
    \label{eq:JumpOperatorMPOdecomposition}
    L_m = I^{\otimes (\ell - 1)} \otimes L_m^{[\ell]} \otimes I^{\otimes (L - \ell+1)} = I_{\backslash \ell} \otimes L_m^{[\ell]},
\end{equation}
where $ L_m^{[\ell]} $ acts on the $\ell^{\text{th}}$ site and $I_{\backslash \ell}$ denotes the identity operator acting on all sites except $\ell$.
This allows the dissipative Hamiltonian to be localized site-wise
\begin{equation}
   \begin{split}
    H_{D} &= -\frac{i}{2} \sum_{m=1}^k \gamma_m L_m^{\dagger} L_m \\
           &= -\frac{i}{2} \sum_{\ell=1}^L \Bigl[ \sum_{j \in S(\ell)} \gamma_j (I_{\backslash \ell} \otimes L_j^{[\ell]\dagger}L_j^{[\ell]} ) \Bigr],
    \end{split}
    \label{eq:DissipativeHamiltonian}
\end{equation}
where $S(\ell) \subseteq [1, \dots, k]$ is the set of indices for the jump operators in $\{L_m\}_{m=1}^k$ for which there is a non-identity term at site $\ell$.
When exponentiated, this results in
\begin{equation}
\label{eq:DissipativeMPO}
    \begin{split}
    \mathcal{D}[\delta t] &= e^{-i \left( -\frac{i}{2} \sum_{\ell=1}^L \Bigl[ \sum_{j \in S(\ell)} \gamma_j (I_{\backslash \ell} \otimes L_j^{[\ell]\dagger}L_j^{[\ell]}  \Bigr] \right) \delta t}  \\
    &= \prod_{\ell=1}^L e^{-\frac{1}{2} \sum_{j \in S(\ell)}    \gamma_j (I_{\backslash \ell} \otimes L_j^{[\ell]\dagger}L_j^{[\ell]})  \delta t} \\
    &= \prod_{\ell=1}^L e^{I_{\backslash \ell} \otimes (-\frac{1}{2} \delta t \sum_{j \in S(\ell)} \gamma_j    L_j^{[\ell]\dagger}L_j^{[\ell]})  }\\
    &= \bigotimes_{\ell=1}^L e^{-\frac{1}{2} \delta t \sum_{j \in S(\ell)} \gamma_j    L_j^{[\ell]\dagger}L_j^{[\ell]}}
    = \bigotimes_{\ell=1}^L \mathcal{D}_\ell [\delta t].
    \end{split}
\end{equation}
 The resulting  operator corresponds to a factorization of site tensors where $\mathcal{D}_\ell [\delta t] \in \mathbb{C}^{d \times d}$ for $\ell \in [1, \dots, L]$.

\begin{figure}
    \centering
    \includegraphics[width=0.75\linewidth]{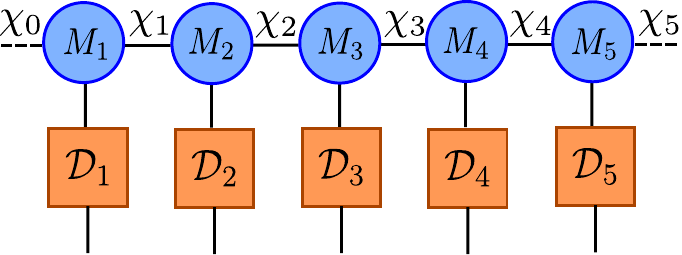}\\[1.5ex]
    \includegraphics[width=0.85\linewidth]{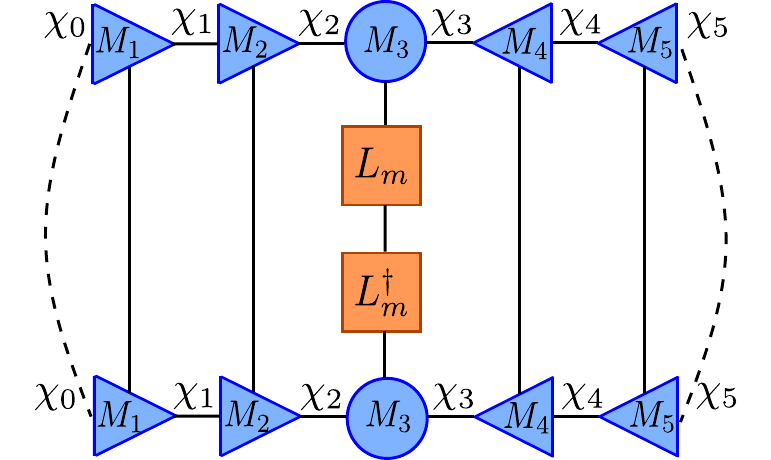}
    \caption{
        Two components of the \emph{tensor jump method} (TJM).
        \textbf{(Top)} Illustration of the application of the factorized dissipative MPO $\mathcal{D}$ (constructed from local matrices) to an MPS $\ket{\Psi}$. Each local tensor corresponds to the exponentiation of local jump operators: $\mathcal{D}_\ell = e^{-\frac{1}{2} \delta t \sum_{j \in S(\ell)} L_j^{[\ell] \dagger} L_j^{[\ell]}}$. This operation does not change the bond dimension and does not require canonicalization.
        \textbf{(Bottom)} Visualization of the tensor network required to compute $\delta p_m$ for a given jump operator $L_m$, which corresponds to the expectation value $\langle L^\dagger_m L_m \rangle$. By putting the MPS in mixed canonical form centered at site $\ell$, contractions over tensors to the left and right reduce to identities. This enables efficient computation of the probability distribution $\Pi(t)$ via a sweep across the network, evaluating local jump probabilities $L_j$ at each site $j \in S(\ell)$.
    }
    \label{fig:TJM_Steps}
\end{figure}

\subsection{Stochastic jump process} \label{sec:jumpprocess}
Following each $\mathcal{F}^{(i)}_j$, we perform the jump process $\mathcal{J}_{\epsilon} [\delta t]$ for determining if (and how) jump operators should be applied to the MPS. $\mathcal{J}_{\epsilon}[\delta t]$ is the value of a stochastic function $\mathcal{J}$ that maps a randomly selected $\epsilon \in [0,1]$ in combination with a time step size $\delta t$ to an operator. This operator is either the identity operator if no jump occurs or a single site-jump operator $L_m^{[\ell]}, m=1, \dots, k$ in the case of a jump,.
This operator, encoded as a single-site tensor, is then contracted into the sampling MPS $\ket{\boldsymbol{\Phi}(t)}$.

We apply the jump process $\mathcal{J}_\epsilon [\delta t]$ following each operation defined in \cref{eq:TimeEvolutionFunctions}. This means that first the initial time-evolved state 
vector from $t \mapsto t + \delta t$ is created
\begin{equation*}
    \ket*{\boldsymbol{\Phi}^{(i)}(t+\delta t)} = \mathcal{F}^{(i)}_j [\delta t] \ket*{\boldsymbol{\Phi}(t)}.
\end{equation*}
From this we create
\begin{equation}
    \ket*{\boldsymbol{\Phi}(t+\delta t)} = \mathcal{J}_{\epsilon}[\delta t] \ket*{\boldsymbol{\Phi}^{(i)}(t+\delta t)},
\end{equation}
where the operator $\mathcal{J}_\epsilon [\delta t]$ acts as described in the following.

First, the overall stochastic factor $\delta p(t)$ is determined as in the MCWF by taking the inner product of this state, where we begin to sweep across the state to maintain a mixed canonical form following the dissipative contraction. This reduces the calculation to contracting the final tensor of the MPS with itself, i.e.,
\begin{equation}
    \begin{split}
    \delta p & = 1 -  \braket*{\boldsymbol{\Phi}^{(i)}(t+\delta t)} \\
    & = 1- \sum_{\sigma_L=1}^d\sum_{a_{L-1}, a_L = 1}^{\chi_{L-1}, \chi_L} M_L^{\sigma_L, a_{L-1}, a_{L}} \overline{M}_L^{\sigma_L, a_{L-1}, a_{L}}.
    \end{split}
\end{equation}
In contrast to the MCWF, we do not use a first-order approximation of $e^{-iH\delta t}$ to calculate $\delta p(t)$ since the time-evolution has been carried out by the TDVP projectors and the dissipative contraction.
Next, \( \epsilon \in [0, 1] \) is sampled uniformly, which subsequently leads to two possible paths.

\subsubsection{No jump occurs}

If $\epsilon \geq \delta p$, we normalize $\ket*{\boldsymbol{\Phi}^{(i)}(t+\delta t)}$. In this case, the dissipative contraction itself represents the noisy interactions from time $t 
\mapsto t + \delta t$.

\subsubsection{A jump occurs}
If $\epsilon < \delta p$, we generate the probability distribution of all possible jump operators using the initial time-evolved state
\begin{equation}
\label{eq:ProbabilityCalcin MPSformat}
    \Pi_m(t) = \frac{\delta t \gamma_m}{\delta p(t)} \bra*{\boldsymbol{\Phi}^{(i)}(t + \delta t)} L^\dagger_m L_m \ket*{\boldsymbol{\Phi}^{(i)}(t + \delta t)}
\end{equation}
for $m = 1, \dots, k$.
At any given site $\ell$, we can calculate the probability $\Pi_j(t)$ for $j \in S(\ell)$ according to $\bra*{\boldsymbol{\Phi}^{(i)}(t + \delta t)} L^\dagger_j L_j \ket*{\boldsymbol{\Phi}^{(i)}(t + \delta t)}$ for the relevant jump operators $L_j$. When performed in a half-sweep across $\ell = [1, \dots, L]$ where at each site the MPS is fixed into its mixed canonical form, the probability is calculated by a contraction of the jump operator and the site tensor $M_\ell$
\begin{equation}
    N_\ell^{\sigma'_\ell, a_\ell, a_{\ell-1}} = \sum_{\sigma_\ell=1}^d L^{\sigma'_\ell, \sigma_\ell}_m M_\ell^{\sigma_\ell, a_{\ell-1}, a_\ell}.
\end{equation}
Note that this is not directly updating the MPS but rather using the current state of its tensors to calculate the stochastic factors.
Then, the inner product of this new tensor with itself is evaluated while scaling it accordingly as done in the MCWF
\begin{equation}
    \Pi_m(t) = \frac{\delta t \gamma_m}{\delta p(t)} \sum_{\sigma´_\ell=1}^d\sum_{a_{\ell-1}, a_\ell=1}^{\chi_{\ell-1}, \chi_\ell} N_\ell^{\sigma'_\ell, a_{\ell-1}, a_\ell} \overline{N}_\ell^{\sigma'_\ell, a_\ell, a_{\ell-1}}.
\end{equation}
This is repeated for $j \in S(\ell)$ until all jump probabilities at site $\ell$ are calculated.
We then move to the next site $\ell 
\mapsto \ell+1$ performing the same process until $\ell=L$ and the half-sweep is complete.

This yields the probability distribution $\Pi(t) = \{\Pi_m(t)\}_{m=1}^k$ from which we can randomly select a jump operator $L_m$ to apply to $\ket*{\boldsymbol{\Phi}^{(i)}(t+\delta t)}$.  This is achieved by multiplying $L_m$ into the relevant site tensor $M_\ell$ with elements
\begin{equation*}
    \tilde{M}^{\sigma_\ell, \alpha_\ell, \alpha_{\ell -1}}_\ell :=  \sqrt{\gamma_m} \sum^{d}_{\sigma'_\ell=1}  L_m^{[\ell] \ \sigma'_\ell, \sigma_\ell} M^{\sigma'_\ell, \alpha_\ell, \alpha_{\ell-1}}_\ell.
\end{equation*}
The result is the updated MPS
\begin{equation*}
    \ket{\boldsymbol{\Phi}(t + \delta t)} = \sum^{d}_{\sigma_1, \ldots, \sigma_L=1} M^{\sigma_1}_1 \dots \tilde{M}^{\sigma_\ell}_\ell \dots M^{\sigma_L}_L \ket{\sigma_1, \ldots, \sigma_L}.
\end{equation*}
The state is then normalized through successive SVDs before moving onto the next time step. This allows the state to naturally compress as the application of jumps often suppress entanglement growth in the system.
Note that this is a fundamental departure from the MCWF in which the jump is applied to the state at the previous time $t$.


\subsection{Algorithm} \label{sec:FullAlgorithm}
With all necessary tensor network methods established, we can now combine the procedures from the previous sections to construct the complete TJM algorithm.

\subsubsection{Initialization}
The TJM requires the following components:
\begin{enumerate}
    \item \( \ket{\boldsymbol{\Psi}(0)} \): Initial quantum state vector, represented as an MPS.
    \item \( \boldsymbol{H_0} \): Hermitian system Hamiltonian, represented as an MPO.
    \item \( \{L_m\}_{m=1}^k, \{\gamma_m\}_{m=1}^k \): A set of single-site jump operators stored as matrices with their respective coupling factors.
    \item \( \delta t \): Time step size.
    \item \( T \): Total evolution time.
    \item \( \chi_{\text{max}} \): Maximum allowed bond dimension.
    \item \( N \): Number of trajectories.
\end{enumerate}

\subsubsection{Time evolution of the sampling MPS}
Once these components are defined, the noisy time evolution from \( t \in [0, T] \) is performed by iterating through each time step using the operators described in \cref{eq:TimeEvolutionFunctions}.
We first initialize the sampling MPS for the time evolution. The initial state \( \ket{\boldsymbol{\Psi}(0)} \) is evolved using the operator \( \mathcal{F}_0  =  \mathcal{J}_\epsilon[\delta t] \ \mathcal{D} \bigl[ \frac{\delta t}{2} \bigr]\), which includes a half-time step dissipative contraction and a stochastic jump process $\mathcal{J}_\epsilon[\delta t]$
\begin{equation}
    \ket{\boldsymbol{\Phi}(\delta t)} = \mathcal{F}_0 \ket{\boldsymbol{\Psi}(0)}.
\end{equation}
The algorithm then evolves the system to each successive time step \( t = j \delta t \) using the operators \( \mathcal{F}_j =  \mathcal{J}_{\epsilon}[\delta t] \ \mathcal{D}[\delta t] \ \mathcal{U}[\delta t] \) as seen in \cref{eq:SimulationStep}
\begin{equation}
    \ket{\boldsymbol{\Phi} \bigl( (j+1) \delta t \bigr)} = \mathcal{F}_j \ket{\boldsymbol{\Phi}(j \delta t)}.
\end{equation}
Each iteration involves the following.
\begin{itemize}
    \item A full time step unitary operation $\mathcal{U}[\delta t]$ using the dynamic TDVP:
    \begin{itemize}
        \item If any bond dimension \( \chi_\ell < \chi_{\text{max}} \), 2TDVP is applied, allowing the bond dimensions to grow dynamically.
        \item If the bond dimension has reached \( \chi_{\text{max}} \), 1TDVP is applied to constrain further growth and maintain computational efficiency.
    \end{itemize}
    \item A full time step dissipative contraction $\mathcal{D}[\delta t]$ where the non-unitary part of the evolution is applied.
    \item A jump process to determine whether quantum jumps occur, including normalization of the state.
\end{itemize}
This process is repeated until the time evolution reaches terminal \(  T \).

\subsubsection{Retrieving the quantum state}
At any point during the time evolution, the quantum state can be retrieved by applying the final function \( \mathcal{F}_n = \mathcal{J}_{\epsilon}[\delta t] \ \mathcal{D} \left[ \frac{\delta t}{2} \right] \ \mathcal{U}[\delta t]\) as if the system has evolved to the stopping time
\begin{equation}
    \ket{\boldsymbol{\Psi}(\delta t)} = \mathcal{F}_n \ket{\boldsymbol{\Phi}(\delta t)}.
\end{equation}
The state is then normalized.
This allows for the state to be inspected or observables to be calculated at any intermediate time without compromising the reduction in error from the Strang splitting.
The above process is repeated from the beginning for each of the $N$ trajectories, providing access to a compact storage of the density matrix and its evolution as well as the ability to calculate expectation values.

\begin{table}[]
\centering
\small
    \begin{tabular}{@{}lccc@{}}
    \toprule
    \textbf{Method} & \textbf{Time Evolution} & \textbf{Storage} & \textbf{Exp. Value} \\ \midrule
    Lindblad & $\mathcal{O}(n d^{6L})$   & $\mathcal{O}(d^{2L})$ & $\mathcal{O}( d^{6L})$ \\
    MCWF & $\mathcal{O}(N n d^{3L})$   & $\mathcal{O}(N d^{L})$ & $\mathcal{O}(N d^{4L})$ \\
    MPO & $\mathcal{O}(nLd^4D_H^2D_s^2)$   & $\mathcal{O}( L d^2 D_s^2 )$ & $\mathcal{O}(L d^2 D_s^3)$ \\
    TJM & $\mathcal{O}(NnL \chi_{\text{max}}^3 [dD + d^2])$   & $\mathcal{O}(N L d \chi_{\text{max}}^2 )$ & $\mathcal{O}(N L d D \chi_{\text{max}}^3)$ \\ \bottomrule
    \end{tabular}
    \caption{This table compares the complexities between each method, including the time to generate the time-evolution, store the final data structure, and calculate expectation values with the result. These are dependent on the physical dimension $d$, system size $L$, time steps $n$, trajectories $N$, MPS bond dimension $\chi_\text{max}$, system MPO bond dimension $D$, density matrix bond dimension $D_s$, and Hamiltonian bond dimension $D_H$ according to the $W^{II}$ algorithm \cite{Zaletel_2015, landa_nonlocal_2023}. Note that this assumes that all information is kept, despite the TJM being embarassingly parallel, where the individual trajectories could be used to calculate an expectation value and then be discarded in most practical contexts.
    }
    \label{table:Complexity}
\end{table}

\section{Computational complexity and convergence guarantees}
While the TJM method proposed here is in practice highly functional and performs well, it can also be equipped with tight bounds concerning the 
computational and memory complexity as well as with convergence guarantees.
This section is devoted to justifying this utility in approximating Lindbladian dynamics by analyzing the mathematical behavior of the TJM based on its convergence and error bounds. Since we assert that the TJM is highly-scalable compared to other methods, this analytical proof serves to lend credence to large-scale results, which may have no other method against which we can benchmark.
These important points are discussed here, while substantial additional details are presented in the methods and appendix sections.

\subsection{Computational effort}
We derive and compare the computational and memory complexity of the exact calculation of the Lindblad equation, the MCWF, a Lindblad MPO, and the TJM method in detail in \cref{Methods:TJMComplexity1}--\ref{Methods:TJMComplexity3} and \cref{appendix:TJMComplexity}. The results are summarized in \cref{table:Complexity}, showcasing the highly beneficial and favorable scaling of the computational and memory complexities of the TJM method.

\subsection{Monte Carlo convergence}
\label{sec:Convergence} 
The convergence of the TJM is stated in terms of the density matrix standard deviation, which we define as follows.

\begin{definition}[Density matrix variance]
Let \(\|\cdot\|\) be a matrix norm, and let \(X \in \mathbb{C}^{n \times n}\) be a matrix-valued random variable defined on a probability space \((\Omega, \mathcal{F}, \mathbb{P})\), where \(\mathbb{P}\) is a probability measure.
The variance of \(X\) with respect to the norm \(\|\cdot\|\) is defined as
\begin{equation}
\mathbb{V}[X] = \mathbb{E} \left[ \| X - \mathbb{E}[X] \|^2 \right],
\end{equation}
where \(\mathbb{E}[X]\) denotes the expectation of \(X\). The expectation \(\mathbb{E}[X]\) is computed entrywise, with each entry being the expectation according to the respective marginal distributions of the entries. Specifically, for each \(i,j\),
\begin{equation}
\mathbb{E}[X]_{i,j} = \mathbb{E}_{\mathbb{P}_{i,j}}[x_{i,j}],
\end{equation}
where \(x_{i,j}\) is the \((i,j)\)-th entry of the matrix \(X\), and \(\mathbb{P}_{i,j}\) is the 
marginal distribution of \(x_{i,j}\) 
induced by \(\mathbb{P}\). The expectation value of the norm of a matrix $\mathbb{E}[\| \cdot\|]$ is defined as the 
multidimensional integral over 
the function $\| \cdot \| : \mathbb{C}^{n \times n} \rightarrow \mathbb{R}$ according to its marginal distributions \(\mathbb{P}_{i,j}\).
\end{definition}

The standard deviation of \(X\) with respect to the norm \(\|\cdot\|\) is then defined as
\begin{equation}
\sigma(X) = \sqrt{\mathbb{V}[X]}=\sqrt{\mathbb{E} \left[ \| X - \mathbb{E}[X] \|^2 \right]}.
\end{equation}
For the proof of the convergence, we furthermore need additional properties of the density matrix variance, which specifically hold true for the Frobenius norm.
They are given in \cref{Proof:ConvergenceTJM}.
With this, the convergence of TJM can be proved as follows.

\begin{theorem}[Convergence of TJM]\label{theorem:ConvergenceTJM}
Let $d \in \mathbb{N}$ be the physical dimension and $L \in \mathbb{N}$ be the number of sites in the open quantum system described by the Lindblad master equation \cref{eq:Lindblad}. Furthermore, let $\boldsymbol{\rho}_N(t)= \frac{1}{N} \sum_{i=1}^N \ket{\boldsymbol{\Psi_i}(t)}\bra{\boldsymbol{\Psi_i}(t)}$ be the approximation of the solution $\rho(t)$ of the Lindblad master equation in MPO format at time $t \in [0,T]$ for some ending time $T > 0$ and $N \in \mathbb{N}$ trajectories, where $\ket{\boldsymbol{\Psi_i}(t)}$ is a trajectory sampled according to the TJM in MPS format of full bond dimension.
Then, the expectation value of the approximation of the corresponding density matrix $\rho_N(t) \in \mathbb{C}^{d^L \times d^L}$ is given by $\rho(t)$ and there exists a $c > 0$ such that the standard deviation of $\rho_N(t)$ can be upper bounded by  
\begin{equation}
    \sigma(\rho_N(t))\leq \frac{c}{\sqrt{N}}
\end{equation}
for all matrix norms $\|\cdot\|$ defined on $\mathbb{C}^{d^L,d^L}$.
\end{theorem}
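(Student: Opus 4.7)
The plan is to split the argument into two independent claims. First, establish that the estimator $\rho_N(t)$ is unbiased, i.e., $\mathbb{E}[\rho_N(t)] = \rho(t)$. Second, derive the $1/\sqrt{N}$ decay of the standard deviation with respect to an arbitrary matrix norm by combining a classical Monte Carlo variance argument with the equivalence of norms on the finite-dimensional space $\mathbb{C}^{d^L \times d^L}$. Because the trajectories are sampled at full bond dimension, the dynamic TDVP step is exact and no projection error is introduced, so only the intrinsic stochastic fluctuation of the Monte Carlo sample needs to be controlled.

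For unbiasedness I would invoke the MCWF unraveling result collected in \cref{appendix:QJM-Lindblad}, which ensures that a single TJM trajectory satisfies $\mathbb{E}[\ket{\boldsymbol{\Psi}_i(t)}\bra{\boldsymbol{\Psi}_i(t)}] = \rho(t)$ under the higher-order splitting of \cref{sec:Trotterization}. Since the $N$ trajectories are independent and $\rho_N(t)$ is their arithmetic mean, linearity of expectation applied entrywise immediately yields $\mathbb{E}[\rho_N(t)] = \rho(t)$.

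For the variance bound, my plan is to work first with the Frobenius norm, which is induced by the Hilbert--Schmidt inner product and therefore decomposes entrywise as $\|A\|_F^2 = \sum_{i,j}|a_{i,j}|^2$. Under the density matrix variance defined above, this rewrites $\mathbb{V}_F[\rho_N(t)]$ as a sum over the scalar variances of the entries $(\rho_N(t))_{i,j}$; independence across trajectories then contributes a factor $1/N$ in the standard Monte Carlo fashion. Each entry of the rank-one projector $\ket{\boldsymbol{\Psi}_i(t)}\bra{\boldsymbol{\Psi}_i(t)}$ is uniformly bounded in modulus (since the pure state is normalized), so each entrywise variance is bounded and summing over the $d^{2L}$ entries gives $\mathbb{V}_F[\rho_N(t)] = \mathcal{O}(1/N)$. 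Finally, by equivalence of norms on $\mathbb{C}^{d^L \times d^L}$, for any matrix norm $\|\cdot\|$ there exists $K>0$ with $\|A\|\leq K\|A\|_F$, which upon squaring, taking expectations, and taking square roots transfers the bound to $\sigma(\rho_N(t)) \leq c/\sqrt{N}$ for an appropriate constant $c$ depending on the chosen norm and on $d^L$.

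The main obstacle I anticipate is a technical bookkeeping point tied to the variance definition: the norm is applied to a matrix-valued random variable whose entries are coupled through the joint trajectory distribution, whereas the definition takes expectations entrywise via the marginals. I would need to verify that the usual i.i.d.\ variance decomposition remains valid under this convention for the Frobenius norm (which should reduce cleanly to scalar Monte Carlo on each entry using the auxiliary properties flagged in \cref{Proof:ConvergenceTJM}), and then check that the transfer to arbitrary matrix norms via equivalence preserves the $1/\sqrt{N}$ scaling. Once this is cleared, the remainder of the argument is a direct combination of scalar Monte Carlo convergence and finite-dimensional norm equivalence, both of which are textbook.
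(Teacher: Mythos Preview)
Your proposal is correct and follows the same architecture as the paper: unbiasedness via the MCWF--Lindblad equivalence of \cref{appendix:QJM-Lindblad}, a Frobenius-norm variance computation exploiting the i.i.d.\ structure of the trajectories (this is precisely what \cref{lemma:FrobeniusVariance} packages), and then transfer to arbitrary matrix norms via finite-dimensional norm equivalence.

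The only substantive difference is in how you bound the single-trajectory Frobenius variance. You propose summing $d^{2L}$ entrywise variance bounds, which yields a constant growing with system size. The paper instead observes directly that $\|\rho_1-\rho_2\|_F \leq 2$ for any two density matrices, giving the dimension-\emph{independent} bound $\mathbb{V}_F[\ket{\Psi_1(t)}\bra{\Psi_1(t)}] \leq 4$ and hence $\sigma_F[\rho_N(t)] \leq 2/\sqrt{N}$. This is the sharper and more informative estimate, and it is what underlies the paper's remark that convergence is ``independent of system size.'' After invoking norm equivalence for a general $\|\cdot\|$, both arguments pick up dimension-dependent constants anyway, so the theorem as stated is recovered either way; but the paper's route gives you the clean Frobenius constant for free. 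Your flagged concern about the entrywise-marginal variance convention is exactly what \cref{lemma:FrobeniusVariance} handles.
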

The full proof of \cref{theorem:ConvergenceTJM} can be found in \cref{Proof:ConvergenceTJM}.
For the convenience of the reader, a short sketch of the proof is provided here:

\begin{proof}
    By the law of large numbers and the equivalence proof in \cref{appendix:QJM-Lindblad}, it follows that for every $N \in \mathbb{N}$ and every time $t \in [0,T]$ we have that $\mathbb{E}[\rho_N(t)]=\rho(t)$. The proof is carried out in state vector and density matrix format since MPSs and MPOs with full bond dimension exactly represent the corresponding vectors and matrices. Thus, we denote the state vector of a trajectory sampled according to the TJM at time $t$ by $\ket{\Psi_i(t)}$. Using \cref{lemma:FrobeniusVariance} and the fact that each trajectory is independently and identically distributed, we see that the variance of \(\mathbb{V}_F[\rho_N(t)]\) decreases linearly with \(N\) by
    \begin{align}
        \mathbb{V}_F[\rho_N(t)] &= \mathbb{V}_F\left[\frac{1}{N} \sum_{i=1}^N \ket{\Psi_i(t)}\bra{\Psi_i(t)} \right]\\
        &= \frac{1}{N^2} \mathbb{V}_F\left[ \sum_{i=1}^N \ket{\Psi_i(t)}\bra{\Psi_i(t)} \right] \notag \\
        &= \frac{1}{N^2} \sum_{i=1}^N \mathbb{V}_F\left[ \ket{\Psi_i(t)}\bra{\Psi_i(t)} \right]\\
        &= \frac{1}{N} \mathbb{V}_F[\ket{\Psi_1(t)}\bra{\Psi_1(t)}] 
        \leq \frac{4}{N},
    \end{align}
    where the second to last step follows from the identical distribution of all $\ket{\Psi_i(t)}\bra{\Psi_i(t)}$ for $i=1, \dots, N$. Hence, the Frobenius norm standard deviation is upper bounded by
    \begin{equation}
        \sigma_F[\rho_N(t)] = \frac{1}{\sqrt{N}} \sigma_F[\ket{\Psi_1(t)}\bra{\Psi_1(t)}] \leq \frac{2}{\sqrt{N}}. 
    \end{equation}
    By the equivalence of norms on finite vector spaces, there exists \( c_1, c_2 \in \mathbb{R} \) such that \( c_1 \|A\|_F \leq \|A\| \leq c_2 \|A\|_F \) for all complex square matrices \( A \) and all matrix norms $\| \cdot \|$. Consequently, the convergence rate $\mathcal{O}(1/\sqrt{N})$ also holds true in trace norm and any other relevant matrix norm and is independent of system size. The statement then follows directly. 
\end{proof}

\subsection{Error measures}
The major error sources of the TJM are as follows:
\begin{enumerate}
    \item the time step error of the \emph{Strang splitting} ($\mathcal{O}(\delta t^3)$) \cite{Gradinaru2007},
    \item the time step error of the dynamic TDVP ($\mathcal{O}(\delta t^3)$ per time step and $\mathcal{O}(\delta t^2)$ for the whole time-evolution), and 
    \item the \emph{projection error} of the dynamic TDVP.
\end{enumerate}
Note that for 2TDVP the projection error is exactly zero if we consider Hamiltonians with only nearest neighbor interactions \cite{Haegeman_2016, Paeckel_2019} such that the projection error depends on the Hamiltonian structure. 
If each of the mentioned errors were zero, we would in fact calculate the MCWF from which we know that its stochastic uncertainty decreases with increasing number of trajectories according to the standard Monte Carlo convergence rate as shown in \cref{sec:Convergence}.

The projection error of 1TDVP can be calculated as the norm of the difference between the true time evolution vector $\boldsymbol{H_0}\ket{\boldsymbol{\Phi}}$ and the projected time evolution vector $P_{\mathcal{M}_\chi,\ket{\boldsymbol{\Phi}}}\boldsymbol{H_0}\ket{\boldsymbol{\Phi}}$. It depends on the structure of the Hamiltonian and the chosen bond dimensions $\chi \in \mathbb{N}^{L+1}$ 

\begin{equation}
    \epsilon(\chi) = \|(I-P_{\mathcal{M}_\chi,\ket{\boldsymbol{\Phi}}})\boldsymbol{H_0}\ket{\boldsymbol{\Phi}} \|_2.
\end{equation}
This error can be evaluated as shown in Ref.\ \cite{Haegeman_2013}. 
It is well-known that the 1TDVP projector solves the minimization problem 
\begin{equation}
    P_{\mathcal{M}_\chi,\ket{\boldsymbol{\Phi}}}\boldsymbol{H_0}\ket{\boldsymbol{\Phi}} = \underset{M \in \mathcal{M}_{\chi}}{\text{argmin}} \|\boldsymbol{H_0}\ket{\boldsymbol{\Phi}} - M \|_2.
\end{equation}
It can thus be noted that TJM uses the computational resources in an optimal way regarding the accuracy in time-evolution \cite{Haegeman2011}. The errors in the dissipative contraction and in the jump application are both zero.

\begin{figure*}
    \centering

    \begin{minipage}[t]{0.45\linewidth}
        \centering
        \includegraphics[width=\linewidth]{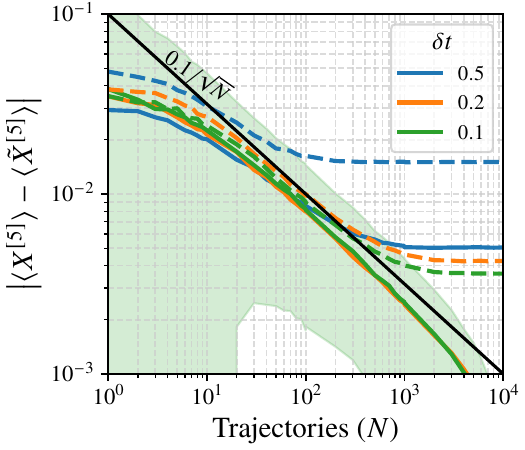}
        \vspace{1ex}
        \textbf{(a)}
    \end{minipage}
    \hfill
    \begin{minipage}[t]{0.51\linewidth}
        \centering
        \includegraphics[width=\linewidth]{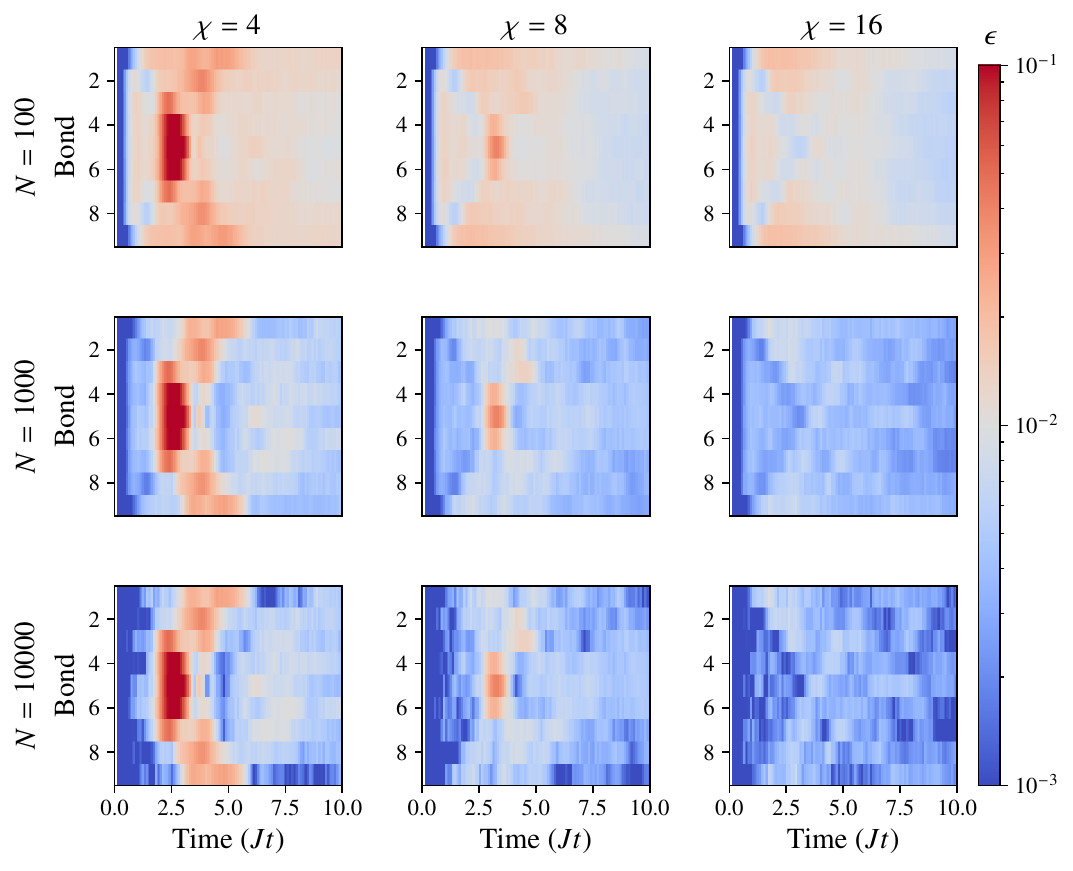}
        \vspace{1ex}
        \textbf{(b)}
    \end{minipage}

    \caption{
    Convergence benchmarks of the TJM. 
    \textbf{(a)} Error in the local observable $\langle X^{[5]}\rangle$ at time $Jt = 1$ as a function of the number of trajectories $N$, for several time step sizes $\delta t$. Each point represents the average over 1000 batches. Dotted lines show the corresponding first-order Trotterization results for comparison. The shaded region indicates the standard deviation for $\delta t = 0.1$. The error follows the predicted scaling $\sim C / \sqrt{N}$, with $C \approx 0.1$, demonstrating that the second-order Trotter scheme yields low time-step errors such that $N$ dominates over $\delta t$ in determining convergence.
    \textbf{(b)} Convergence of the two-site correlator $\langle X^{[i]} X^{[i+1]}\rangle$ over time $Jt \in [0, 10]$ at fixed $\delta t = 0.1$, shown as a function of trajectory number $N$ and bond dimension $\chi$. Each panel displays the local observable error, averaged over 1000 batches of $N$ trajectories. The colormap is centered at an error threshold $\epsilon = 10^{-2}$: blue indicates lower error, red higher. While increasing $N$ significantly improves global accuracy, a larger bond dimension is still required to resolve localized dynamical features.
    }
    \label{fig:TJM_Convergence}
\end{figure*}

\section{Benchmarking}
\label{sec:Benchmarking}
To benchmark the proposed TJM, we consider a 10-site \emph{transverse-field Ising model} (TFIM),
\begin{equation}
    H_0 = -J \sum_{i=1}^{L-1} Z^{[i]} Z^{[i+1]} \;-\; g \sum_{j=1}^L X^{[j]},
\end{equation}
at the critical point \(J = g = 1\) where \(X^{[i]}\) and \(Z^{[i]}\) are Pauli operators acting on the \(i\)-th site of a 1D chain. We evolve the state $\ket{0 \dots 0}$ according to this Hamiltonian under a noise model that consists of single-site relaxation and dephasing operators,
\begin{equation}
    \sigma_- = \begin{pmatrix}
    0 & 1 \\
    0 & 0
    \end{pmatrix},
    \ \ \ \
    Z = \begin{pmatrix}
    1 & 0 \\
    0 & -1
    \end{pmatrix},
\end{equation}
on \emph{all} sites in the lattice. Thus, our set of Lindblad jump operators is given by
\begin{equation}
    \{L_m\}_{m=1}^{2L} = \{\sigma_{-}^{[1]}, \dots, \sigma_{-}^{[L]},\, Z^{[1]}, \dots, Z^{[L]}  \}
\end{equation}
with coupling factors \(\gamma = \gamma_{-} = \gamma_{z} = 0.1\). 

All simulations reported here (other than the comparison with the 100-site steady state, which was run on a server) were performed on a consumer-grade Intel i5-13600KF CPU (5.1\,GHz, 14~cores, 20~threads), using a parallelization scheme in which each TJM trajectory runs on a separate thread. This setup exemplifies how the TJM can handle large-scale open quantum system simulations efficiently even without specialized high-performance hardware. An implementation can be found in the \emph{MQT-YAQS} package available at~\cite{YAQS} as part of the \mbox{\emph{Munich Quantum Toolkit}} \cite{wille_mqt2024}.


\subsection{Monte Carlo convergence}
We first examine how the TJM converges with respect to the number of trajectories \(N\) and the time step size \(\delta t\). As an exact reference, a direct solution of the Lindblad equation via QuTiP~\cite{johansson_qutip_2012, johansson_qutip_2013} is used.

In \cref{fig:TJM_Convergence}(a), the absolute error in the expectation value of a local \(X\) operator at the chain’s center is plotted, evaluated at \(Jt=1\) for up to \(N=10^4\) trajectories and for several time step sizes \(\delta t \in \{0.1, 0.2, 0.5\}\). Each point in the plot represents an average over 1000 independent batches of \(N\) trajectories. The dotted lines correspond to a first-order Trotter decomposition of the TJM serving as a baseline. The solid lines correspond to the second-order Trotterization used as a basis for the TJM. The solid black line represents the expected Monte Carlo convergence \(\propto 1/\sqrt{N}\) (with prefactor \(C=0.1\)).

For the first-order Trotter method (dotted lines), all step sizes induce a plateau, indicating that Trotter errors dominate when \(N\) becomes large.
In comparison, the second-order TJM approach (solid lines) maintains \(\sim C/\sqrt{N}\) scaling for both \(\delta t = 0.1\) and \(\delta t = 0.2\), while still showing significantly lower error for \(\delta t = 0.5\). This confirms that the higher-order Trotterization of the unitary and dissipative evolutions used in the derivation of the TJM reduces the inherent time step error below the level where it competes with Monte Carlo sampling error. While it is possible that for very large \(N\) (beyond those shown here) time discretization errors might again appear, in practice our second-order scheme keeps these systematic errors well below the scale relevant to typical simulation tolerances.

\begin{figure*}
    \centering
    \includegraphics[width=\linewidth]{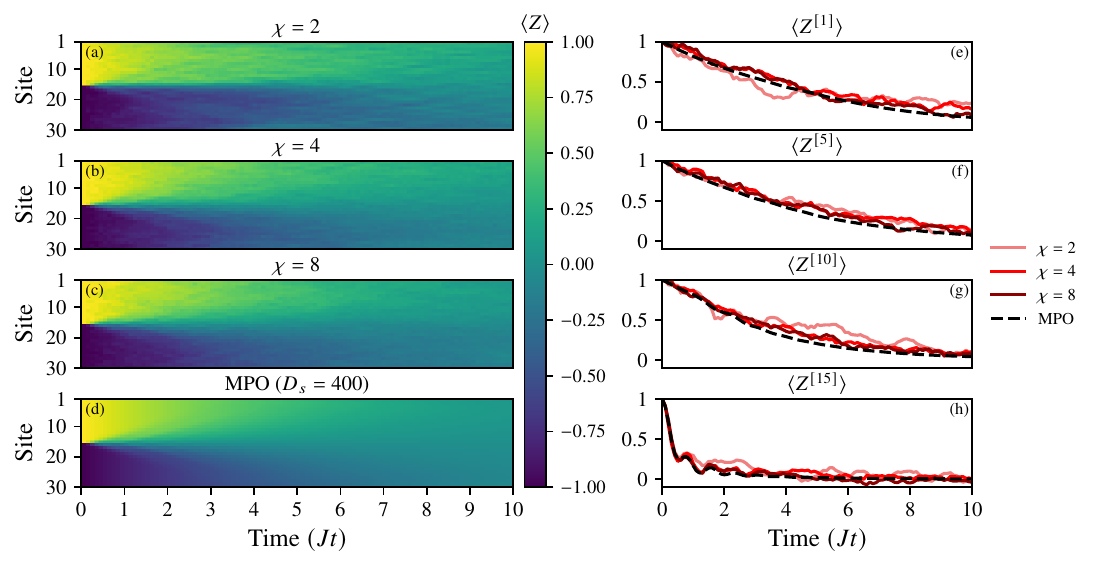}
    \caption{
    Convergence of the TJM with increasing bond dimension \(\chi = \{2, 4, 8\}\) for a 30-site noisy XXX Heisenberg model with parameters \(J = h = 1\) and a domain-wall initial state (wall at site 15). The noise model includes local relaxation and excitation channels \(\sigma_\pm\) with equal strength \(\gamma_\pm = 0.1\). 
    \textbf{(a–d)} Site-resolved magnetization \(\langle Z^{[i]} \rangle\) over time, compared against a reference simulation using an MPO with bond dimension capped at \(D = 400\).
    \textbf{(e–h)} Time evolution of \(\langle Z^{[i]} \rangle\) for selected sites \(i = 1, 5, 10, 15\), showing quantitative agreement between TJM and MPO as \(\chi\) increases.
    All simulations use a timestep \(\delta t = 0.1\), and TJM is averaged over \(N = 100\) trajectories. While the MPO simulation requires over 24 hours, the TJM runs complete in under 5 minutes. This highlights the ability of TJM to efficiently and accurately reproduce both global and local dynamics at modest bond dimension, making it a practical tool for rapid prototyping of noisy quantum dynamics.}
    \label{fig:Results_MPOComparison}
\end{figure*}

\subsection{Effect of bond dimension and elapsed time}
Next, we explore how the TJM’s accuracy depends on the maximum bond dimension \(\chi\) of the trajectory MPS and the total evolution time \(T\). We compute the error in a two-site correlator at each bond
\[
    \epsilon = \bigl|\langle X^{[i]} X^{[i+1]} \rangle - \langle \tilde{X}^{[i]} \tilde{X}^{[i+1]} \rangle \bigr|,
\]
where \(\langle X^{[i]} X^{[i+1]} \rangle \) is the exact value (via QuTiP) and \(\langle \tilde{X}^{[i]} \tilde{X}^{[i+1]} \rangle\) is the TJM result. This is done for each bond at discrete times \(Jt \in \{0,0.1,\dots,10\}\) using a time step \(\delta t=0.1\). We vary the bond dimension \(\chi \in \{2,4,8\}\) and the number of trajectories \(N \in \{100, 1000, 10000\}\). The resulting errors, averaged over 1000 batches for each \((N,\chi)\), are shown in \cref{fig:TJM_Convergence}(b) using a color map centered at \(\epsilon=10^{-2}\).

First, we note that the number of trajectories plays a larger role in the error scaling than the bond dimension, however, a higher bond dimension is required for capturing certain parts of the dynamics.

Second, while the simulation time \(T\) can affect the complexity of the noisy dynamics, the TJM generally maintains similar accuracy at all times. Indeed, times closer to the initial state (\(Jt \approx 0\)) exhibit very low errors (blue regions) simply because noise has had less time to build up correlations, and longer times likely show more dissipation which counters entanglement growth.

Finally, the bond dimension \(\chi\) plays a comparatively minor role in the overall error. Although \(\chi=4\) sometimes fails to capture certain features (leading to increased error in specific time windows), \(\chi=8\) and \(\chi=16\) give similar results, except for at very points in the dynamics such as, in this example, the middle of the chain at around $Jt=3$.

In summary, these benchmarks confirm that (i) the time step error is effectively minimized by the second-order Trotterization, leaving Monte Carlo sampling as the primary source of error, (ii) increasing the number of trajectories \(N\) is typically more crucial than increasing the bond dimension \(\chi\) for a global convergence, and finally (iii) bond dimension can dominate specific points of the local dynamics.

\section{New frontiers}
In this section, we push the limits of the TJM to large-scale noisy quantum simulations, highlighting its practical utility and the physical insights it can provide. Concretely, we explore a noisy XXX Heisenberg chain described by the Hamiltonian
\begin{equation*}
    \begin{split}
    H_0 &= -J \Bigl( \sum_{i=1}^{L-1} X^{[i]} X^{[i+1]} \;+\; Y^{[i]} Y^{[i+1]} \;+\; Z^{[i]} Z^{[i+1]} \Bigr) \\
        &\quad -\, h \sum_{j=1}^L Z^{[j]},
    \end{split}
\end{equation*}
at the critical point $J=h=1$, subject to relaxation ($\gamma_-$) and excitation ($\gamma_+$) noise processes. Each simulation begins with a \emph{domain-wall} initial state
\[
    \ket{\Psi(0)} \;=\; \ket{\sigma_1\,\sigma_2\,\dots\,\sigma_\ell\,\dots\,\sigma_L},
    \quad
    \sigma_\ell = 
    \begin{cases}
        0, & 1 \le \ell < \tfrac{L}{2},\\
        1, & \tfrac{L}{2} \le \ell \le L,
    \end{cases}
\]
such that the top half of the chain is initialized in the spin-down $\ket{0}$ state and the bottom half in the spin-up $\ket{1}$ state, thus forming a sharp “wall”. We track the local magnetization $\langle Z\rangle$ at each site as the primary observable of interest.

To demonstrate the scalability of our approach, we simulate system sizes ranging from moderate $L=30$ to quite large $L=100$, and then up to $L=1000$ sites. By examining a wide range of noise strengths and run times, we reveal how noise impacts the evolution of such extended systems—insights that are otherwise out of reach for many conventional methods. 

\subsection{Comparison with MPO Lindbladians (30 sites)}
We benchmark the TJM method against a state-of-the-art MPO Lindbladian solver (implemented via the \texttt{LindbladMPO} package~\cite{landa_nonlocal_2023}) by simulating a 30-site noisy XXX Heisenberg model with a domain-wall initial state localized at site 15. The results are summarized in \cref{fig:Results_MPOComparison}.

We consider a dissipative scenario with local relaxation and excitation channels $\sigma_{\pm}$ of equal strength $\gamma = 0.1$, and compare TJM simulations at increasing bond dimension $\chi \in \{2, 4, 8\}$ against a reference MPO simulation with bond dimension $D = 400$. All TJM simulations were performed with only $N = 100$ trajectories, deliberately chosen as a low sampling rate to emphasize the method’s practicality for fast, exploratory research. Despite this modest sample size, we observe accurate dynamics, demonstrating that even low-$N$ TJM simulations are capable of producing reliable physical insight. More precise convergence can be trivially obtained by increasing $N$, making the method tunable in both runtime and accuracy.

Panels (a--d) show the site-resolved magnetization $\langle Z^{[i]} \rangle$ over time, illustrating rapid convergence of the TJM as $\chi$ increases. Even at $\chi = 4$, the TJM closely reproduces the global magnetization profile obtained from the MPO benchmark. All simulations exhibit the expected spreading and dissipation of the initial domain wall under the influence of noise.

Panels (e--h) track the time evolution of $\langle Z^{[i]} \rangle$ at selected sites $i = 1, 5, 10, 15$, revealing excellent agreement between TJM and MPO results, with small stochastic fluctuations at low $\chi$ and early times before the noise model dominates. The consistency across both global and local observables highlights the accuracy of the TJM approach, even under tight computational budgets.

In terms of performance, while the MPO simulation required over 24 hours to complete, each TJM simulation (with $N = 100$) ran in under 5 minutes. This computational efficiency, combined with flexible accuracy control via the trajectory number $N$, makes TJM a powerful tool for rapid prototyping and intuitive exploration of noisy quantum dynamics. We also note that the TJM is positive semi-definite by construction as shown in \ref{appendix:QJM-Lindblad}, while the MPO solver does not guarantee this.

\subsection{Validating convergence against analytical results ($L=100$)}
To benchmark our method against known large-scale behavior, we simulate an edge-driven XXX Heisenberg chain of $L=100$ sites with parameters $J=1$ and $h=0$. This model admits an exact steady-state solution in the limit of strong boundary driving~\cite{PhysRevLett.107.137201, PhysRevLett.106.217206}.

Lindblad jump operators act only on the boundary sites,
\[
L_1 = \sqrt{\epsilon}\,\sigma^+_1,\qquad
L_{100} = \sqrt{\epsilon}\,\sigma^-_{100},
\]
where $\sigma^\pm = \frac{1}{2}(X \pm iY)$. In the long-time limit and for $\epsilon \gg 2\pi/L$, the steady-state local magnetization is given by
\[
\langle Z_j \rangle_{\mathrm{ss}} = \cos\!\left[\pi \frac{j-1}{L-1} \right], \quad j = 1,\dots,L.
\]

We simulate the dynamics using the TJM initialized from the all-zero product state, with parameters $\epsilon = 10\pi$, timestep $\delta t = 0.1$, $N = 100$ trajectories, and bond dimension $\chi = 4$. As shown in \Cref{fig:edgedrivenXXX}, the observed local magnetization $\langle Z_j\rangle(t)$ converges to the analytical profile over time, with absolute deviation
\[
\Delta_{\mathrm{steady}}(t) = \big| \langle Z_j(t)\rangle - \langle Z_j \rangle_{\mathrm{ss}} \big|
\]
falling below $10^{-2}$ for all sites by $Jt \approx 90\,000$. This confirms that the TJM achieves accurate convergence to the correct steady state even at scale and with limited resources.

\begin{figure}
    \centering
    \includegraphics[width=\linewidth]{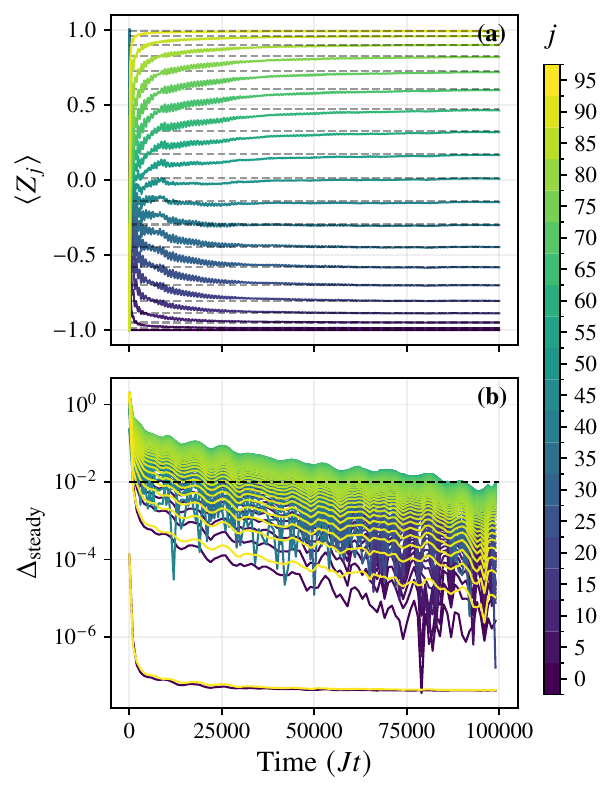}
    \caption{\textbf{(a)} Time evolution of the local magnetization $\langle Z_j\rangle$ for $j=1,\dots,100$ in an edge-driven XXX chain ($J=1$, $h=0$), simulated using the TJM up to $Jt=100\,000$ with timestep $\delta t=0.1$ and boundary driving strength $\epsilon=10\pi$. Solid lines show the simulated trajectories, and dashed lines indicate the exact steady-state values $\cos[\pi(j-1)/(L-1)]$.
    \textbf{(b)} Absolute deviation $\Delta_{\mathrm{steady}} = |\langle Z_j \rangle - \langle Z_j \rangle_{\mathrm{ss}}|$ on a logarithmic scale. The horizontal dashed line at $10^{-2}$ marks the target error, which is achieved by all sites by $Jt \approx 90\,000$.}
    \label{fig:edgedrivenXXX}
\end{figure}

\begin{figure*}
    \centering
    \includegraphics[width=0.8\linewidth]{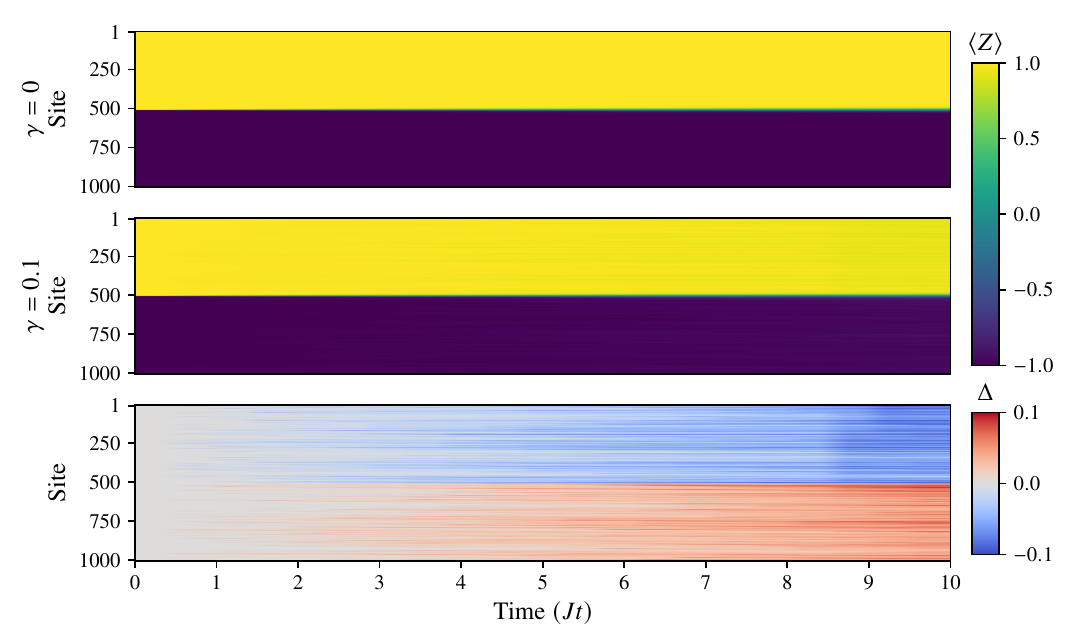}
    \caption{These plots show the results of the evolution of a 1000-site noisy XXX Heisenberg model with parameters $J=1, h=0.5$ and a domain-wall initial state with wall at site 500. First, we run the TJM with $\gamma=0$ to generate a noise-free reference which requires only a single "trajectory". Then, we run it again with $\gamma=0.1$ using $N=100$ and bond dimension $\chi=4$. Next, we run it again with $\gamma=0.1$ using $N=100$ and bond dimension $\chi=4$. Finally, we plot the difference $\Delta$ between these two plots. While visually very similar, we see that the TJM is able to capture even relatively minor noise effects, and, as a result, show how this can eventually lead to macroscopic changes.
    }
    \label{fig:Results_LargeScale}
\end{figure*}

\subsection{Pushing the envelope (1000 sites)}
Finally, we demonstrate the scalability of the TJM by simulating a 
noisy XXX Heisenberg chain of 1000 sites. While our method can handle 
even larger systems -- especially if migrated from consumer-grade to 
server-grade hardware -- we choose 1000 sites here as a reasonable upper bound 
for our present setup. This test, which took roughly 7.5\,hours, maintains 
the same parameters as before: we run a noise-free simulation (\(\gamma=0\)) 
requiring only a single trajectory, and then a noisy simulation with 
\(\gamma=0.1\), bond dimension \(\chi=4\), and \(N=100\) trajectories at 
time step \(\delta t = 0.5\). 

In \cref{fig:Results_LargeScale}, we show the time evolution of the 
noisy and noise-free systems along with the difference 
\(\Delta = \langle Z_{\text{Noisy}} \rangle - \langle Z_{\text{Noise-Free}} \rangle\). 
While the overall domain-wall structure appears visually similar at a glance, 
single-site quantum jumps from relaxation and excitation lead to small, 
localized ``scarring'' that accumulates into increasingly macroscopic changes 
by \(Jt=10\). Notably, although the larger lattice provides more possible sites 
for noise to act upon, the size may confer partial robustness in early stages 
of evolution. Consequently, we see that even a modest amount of noise 
(\(\gamma=0.1\)) can subtly alter the state in a way that becomes significant 
over time. These results underscore that the TJM can efficiently capture 
open-system dynamics in large spin chains on a consumer-grade CPU, thus 
opening new frontiers for studying the interplay between coherent dynamics 
and environmental noise at unprecedented scales.

\section{Discussion}
All quantum system environments are open to some extent, and hence, having powerful tools available that classically simulate interacting open quantum many-body systems is crucial. While recent years have seen a development towards large-scale, state-of-the-art-simulations for quantum ground state and quantum circuit simulations, the same cannot quite be said for the simulation of open quantum systems. This seems a grave omission, given the important role quantum many-body systems play in notions of quantum simulation \cite{CiracZollerSimulation,Trotzky}. The present work is meant to close this gap, providing a massively scalable algorithm for the simulation of Markovian open quantum systems by means of tensor networks, paving the way for large-scale classical simulations of open quantum systems matching similar tools for equilibrium problems. By bridging the gap between theoretical frameworks and practical applications, this work not only advances the field of open quantum systems but also contributes to the broader goal of realizing robust and scalable quantum technologies in real-world settings. It is our hope that this work can provide important services in the benchmarking and design of state-of-the-art physical platforms of quantum simulators in the laboratory, as well as inspire and facilitate research into large-scale open quantum systems and noisy quantum hardware that was previously infeasible.

\section*{Acknowledgments}
We would like to thank Phillipp Trunschke for useful discussions.
This work has been supported by the Munich Quantum Valley, which is supported by the Bavarian state government with funds from the Hightech Agenda Bayern Plus, and for which this work reports on results of a 
joint-node collaboration involving the
FU Berlin and two teams at the TU Munich. 
It has also been funded by the 
the European Union’s Horizon 2020 
Quantum Flagship innovation program Millenion
(grant agreement No.\ 101114305) for which this again
constitutes joint node work by the FU Berlin
and the TU Munich,
and the Einstein Research Unit on Quantum Devices (for 
which this work reflects once more 
joint work now involving the Weierstrass Institute, the 
Zuse Institute 
and the FU Berlin).
The team at Technische Universit{\"a}t M{\"u}nchen
has received additional funding from the European Research Council (ERC) under the European Union’s Horizon 2020 research and innovation program (grant agreement No.\ 101001318).
The team at the Freie Universit{\"a}t Berlin has been 
additionally supported by the DFG (CRC 183, FOR 2724), the BMBF (MuniQC-Atoms), 
the European Union’s Horizon 2020 
Quantum Flagship innovation program PasQuans2
(grant agreement No.\ 101113690), 
Berlin Quantum, and the European Research Council (ERC DebuQC). 

\bibliography{bib.bib}


\section{Methods}

This section is devoted to providing the background of key methods and techniques used in this work. We start by presenting core ideas of tensor network methods \cite{Schollwock_2011, Paeckel_2019, Bridgeman__2017,RevModPhys.93.045003} that this work is building on.

\subsection{Matrix product states}
Consider a one-dimensional lattice made of $L \in \mathbb{N}$ sites, each corresponding to a local Hilbert space $\mathcal{H}_d$ of dimension $d \in \mathbb{N}$. The Hilbert space of the full lattice is then defined by the iterative tensor product of the $L$ local Hilbert spaces $\mathcal{H} = \bigotimes^L_{\ell=1} \mathcal{H}_d$.
Elements $\ket{\Psi}$ of this multi-site Hilbert space $\mathcal{H}$ are state vectors defined by
\begin{equation}
\ket{\Psi} = \sum^{d}_{\sigma_1, \ldots, \sigma_L=1} \Psi_{\sigma_1 \ldots \sigma_L} \ket{\sigma_1, \ldots, \sigma_L} \,
\end{equation}
where $\sigma_\ell \in [1, \dots, d]$ are the physical dimensions for all $\ell=1, \dots ,L$ and $\Psi \in \mathbb{C}^{d^L}$ with elements $\Psi_{\sigma_1 \ldots \sigma_L} \in \mathbb{C}$.

The vector $\ket{\Psi} \in \mathbb{C}^{d^L}$ can be decomposed into a \emph{matrix product state} (MPS) \cite{White1992, Vidal_2004, Schollwock_2011}
\begin{equation} \label{eq:MPS}
    \mathbf{\ket{\Psi}} = \sum^{d}_{\sigma_1, \ldots, \sigma_L=1} M^{\sigma_1}_1 \dots M^{\sigma_L}_L \ket{\sigma_1, \ldots, \sigma_L},
\end{equation}
which is made of $L$ degree-3 tensors
\begin{equation}
M := \{M_\ell \in \mathbb{C}^{d \times \chi_{\ell-1} \times \chi_{\ell}} \ |  \ \ell =1, \dots ,L\},
\end{equation}
consisting of $d$ matrices corresponding to each index $\sigma_\ell$
\begin{equation}
M_\ell := \{M^{\sigma_\ell}_\ell \in \mathbb{C}^{\chi_{\ell-1} \times \chi_{\ell}} \ |  \ \sigma_\ell = 1, \dots ,d\}.
\end{equation}
This structure's complexity is determined by its bond dimensions $\chi_\ell \in \mathbb{N}$ (where $\chi_0 = \chi_L=1$), which scale with the entanglement entropy of the quantum state it represents.
For the rest of this work we denote a quantum state vector living in $\mathcal{H}$ as $\ket{\Psi}$ and we use the bold notation $\boldsymbol{\ket{\Psi}}$ for the same quantum state vector in an MPS representation.

The MPS representation is non-unique and gauge-invariant for representing a given quantum state such that the individual tensors can be placed in canonical forms which allow many operations to reduce in complexity. These conditions are the left canonical form
\begin{equation}
    \sum_{\sigma_\ell = 1}^{d}  \sum_{a_{\ell-1} = 1}^{\chi_{\ell-1}} \overline{M}_\ell^{\sigma_\ell, a_{\ell-1}, a_\ell} M_\ell^{\sigma_\ell, a_{\ell-1}, a_\ell} = I \in \mathbb{C}^{\chi_\ell \times \chi_\ell},
\end{equation}
and the right canonical form
\begin{equation}
    \sum_{\sigma_\ell = 1}^{d}  \sum_{a_{\ell} = 1}^{\chi_{\ell}}  \overline{M}_\ell^{\sigma_\ell, a_{\ell-1}, a_{\ell}} M_\ell^{\sigma_\ell, a_{\ell-1}, a_{\ell}} = I \in \mathbb{C}^{\chi_{\ell-1} \times \chi_{\ell-1}},
\end{equation}
where $a_{\ell} \in [1, \dots, \chi_{\ell}]$ and $\overline{M}$ is the conjugated tensor. 
Finally, these can be combined to fix the MPS in a mixed canonical form around an orthogonality center at site tensor $j$
\begin{equation}
    \begin{split}
    & \sum_{\sigma_\ell = 1}^{d}  \sum_{a_{\ell-1} = 1}^{\chi_{\ell-1}} \overline{M}_\ell^{\sigma_\ell, a_{\ell-1}, a_\ell} M_\ell^{\sigma_\ell, a_{\ell-1}, a_\ell} = I \ \text{such that} \ \ell < j ,\\
    &\sum_{\sigma_\ell = 1}^{d}  \sum_{a_{\ell} = 1}^{\chi_{\ell}} \overline{M}_\ell^{\sigma_\ell, a_{\ell-1}, a_{\ell}} M_\ell^{\sigma_\ell, a_{\ell-1}, a_{\ell}} = I  \ \text{such that} \ \ell > j .
    \end{split}
\end{equation}
A visualization of an MPS in site-canonical 
form at $j=3$ can be found in \cref{fig:MPS_Basics} where the triangular tensors indicate canonical forms and the circular tensor indicates an arbitrary form.
\begin{figure}
    \centering
    \includegraphics[width=.8\linewidth]{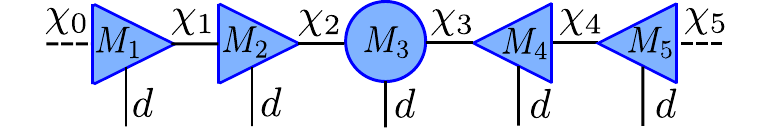}
    \caption{This figure shows a 5-site MPS in mixed canonical form with orthogonality center at site 3. The indices $d_j$ indicate the physical dimensions and $\chi_j$ the bond dimensions. A dashed line indicates a dummy index where $\chi_j = 1$. 
    We use right-pointing triangles to denote the left-canonical form and left-pointing triangles for the right-canonical form. A tensor with no required form is shown by a circle.}
    \label{fig:MPS_Basics}
\end{figure}

\begin{figure*}
    \centering
    \includegraphics[width=0.8\linewidth]{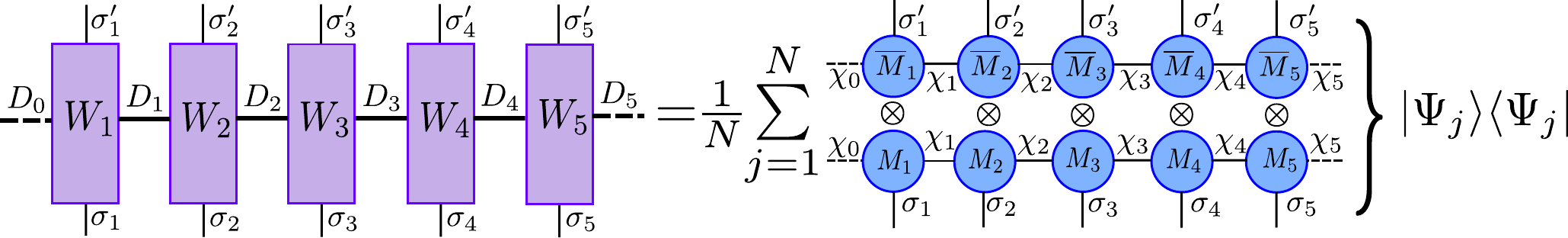}
    \caption{An MPO can be created by a weighted summation of MPS structures and their conjugates as described in \cref{sec:Preliminaries_MPO}. However, this can lead to massive MPO bond dimensions since the bonds for each state vector $\ket{\Psi_j}$ are combined such that $D_\ell = \prod_{j=1}^N (\chi_\ell^2)_j$. This equivalence is critical for the foundations of this work. However, we avoid creating an MPO directly due to the large bond dimensions.}
    \label{fig:MPO_Summation}
\end{figure*}

\subsection{Matrix product (density) operators} \label{sec:Preliminaries_MPO}
Just as states on a one-dimensional lattice can be represented by MPS, the operators acting upon these states can be represented by a tensor train known as \emph{matrix product operators} (MPO) \cite{Hubig_2015, Pirvu_2010}.
Suppose we have a bounded operator \(O \in \mathcal{B}(\mathcal{H})\) such that
\begin{equation}
    O 
    \;=\;
    \sum_{\sigma_1,\sigma'_1,\ldots, \sigma_L, \sigma'_L=1}^d 
    W^{\sigma_1,\sigma'_1,\ldots,\sigma_L,\sigma'_L}
    \,\ket{\sigma_1,\ldots,\sigma_L}\!\bra{\sigma'_1,\ldots,\sigma'_L},
\end{equation}
where \(W \in \mathbb{C}^{d^{2L}}\) and for each element \(W_{\sigma_1, \sigma'_1, \ldots, \sigma_L, \sigma'_L}\) we have \(\sigma_\ell, \sigma'_\ell \in \{1,\dots,d\}\) for all \(\ell=1,\ldots,L\).
This coefficient tensor \(W\) can then be decomposed into a list of \(L\) degree-4 tensors
\begin{equation}
    W := \{W_\ell \in \mathbb{C}^{d \times d \times D_{\ell-1} \times D_{\ell}} \ |  \ \ell =1, \dots ,L\},
\end{equation}
created by degree-2 tensors for the indices \(\sigma_\ell, \sigma'_\ell\):
\begin{equation}
    W_\ell := \{W^{\sigma_\ell,\sigma'_\ell}_\ell \in \mathbb{C}^{D_{\ell-1} \times D_{\ell}} 
      \ |  \ \sigma_\ell, \sigma'_\ell = 1, \dots ,d\}.
\end{equation}
Next, we define multi‐indices
\[
\boldsymbol{\sigma} = (\sigma_1,\ldots,\sigma_L), 
\quad 
\boldsymbol{\sigma}' = (\sigma'_1,\ldots,\sigma'_L),
\]
and write
\[
\ket{\boldsymbol{\sigma}} \equiv \ket{\sigma_1,\ldots,\sigma_L},
\quad
\bra{\boldsymbol{\sigma}'} \equiv \bra{\sigma'_1,\ldots,\sigma'_L},
\]
as well as
\[
W(\boldsymbol{\sigma},\boldsymbol{\sigma}') 
\;:=\; 
\prod_{\ell=1}^L W_\ell^{\sigma_\ell,\sigma'_\ell}.
\]
Then the MPO in bold notation is compactly written as
\begin{equation}
   \boldsymbol{O}
   \;=\;
   \sum_{\boldsymbol{\sigma},\,\boldsymbol{\sigma}'=1}^d 
   W^{\boldsymbol{\sigma},\boldsymbol{\sigma}'} 
   \,\ket{\boldsymbol{\sigma}}\!\bra{\boldsymbol{\sigma}'}.
   \label{eq:mpo_compact}
\end{equation}
Like the MPS, the computational complexity is determined by the bond dimensions \(D_\ell \in \mathbb{N}\), which are related to the operator entanglement \cite{zanardi_2000, jonay_2018}.

Particularly important for this work, density matrices, i.e., mixed quantum states, can be represented in this MPO format \cite{Verstraete_2004}. A mixed quantum state $\rho \in \mathcal{B}(\mathcal{H})$ is equivalent to a finite sum of the outer product of $N \in \mathbb{N}$ state vectors $\ket{\Psi_j}$ and weighted by probabilities $p_j$ with $j=1, \dots, N$ such that
\begin{equation}
\label{eq:MixedStateDef}
    \rho = \sum_{j=1}^N p_j \ket{\Psi_j}\bra{\Psi_j},
\end{equation}
and $\sum_j p_j=1$.
If the pure states are represented as MPS according to \cref{eq:MPS}, this outer product would result in an MPO structure with degree-4 site tensors $W_\ell \in \mathbb{C}^{d \times d \times D_{\ell-1} \times D_{\ell}}, \ell=1, \dots, L$ which can be decomposed in a sum of $N$ outer products of MPS given by 
\begin{equation}
    \{M_{\ell,j} \in \mathbb{C}^{d \times \chi_{\ell-1} \times \chi_{\ell}} | \ell=1, \dots, L\}_{j=1}^N
\end{equation}
such that each site $W_\ell$ is represented by
\begin{equation}
\sum_{j=1}^N \overline{M}_{\ell,j} \otimes M_{\ell,j}.
\end{equation}
The bond dimensions of the MPO are made of the constituent MPS bonds such that $D_{\ell-1} = \sum_{i=1}^N\chi^2_{\ell-1,i}$ and $D_\ell = \sum_{i=1}^N\chi^2_{\ell,i}$, where $\chi_{\ell,i}$ is the $\ell$-th bond dimension of the $i$-th MPS. This can be seen in \cref{fig:MPO_Summation}.

\begin{figure}
    \centering
    \includegraphics[width=.9\linewidth]{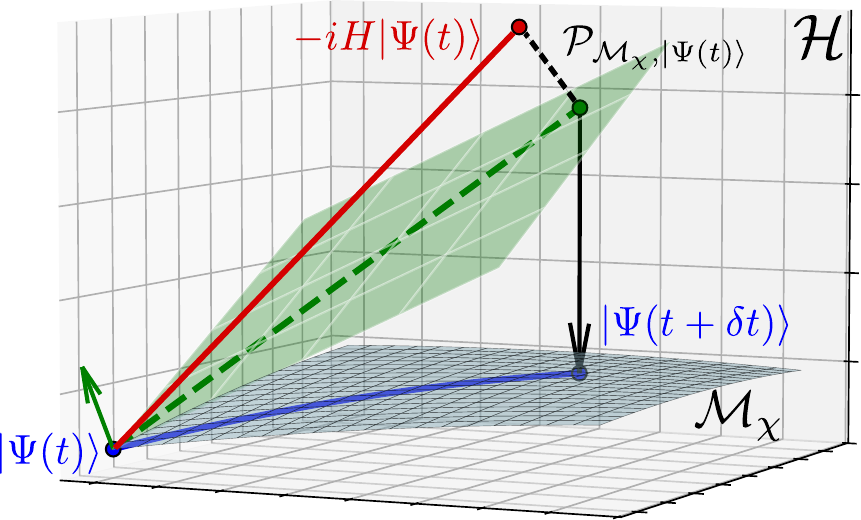}
    \caption{A time-dependent MPS $\ket{\Psi(t)}$ with fixed bond dimensions $\chi$ can be viewed as a point on a manifold $\mathcal{M}_\chi \in \mathcal{H}$. The time-evolution caused by $-iH\ket{\Psi(t)}$ will generally leave the manfiold, requiring a larger bond dimension to represent. TDVP projects this time-evolution to the tangent space of the original manifold (shown by the projector $\mathcal{P}_{\mathcal{M}_\chi, \ket{\Psi(t)}}$). Each point in the tangent space can be described as a linear combination of partial derivatives of the original point such that by solving a set of site-wise coupled differential equations, we can evolve purely on the original manifold and limit the growth in bond dimension to represent $\ket{\Psi(t + \delta t)}$.}
    \label{fig:TDVP_Manifold}
\end{figure}

\subsection{Time-dependent variational principle}
\subsubsection{Overview}
Let $\ket{\boldsymbol{\Psi}(t)} \in \mathcal{H}$ for some $t \in \mathbb{R}_+$ be a time-dependent quantum state represented by an MPS with bond dimensions $\chi = \{\chi_0, \dots,  \chi_L\}$. Then $\ket{\boldsymbol{\Psi}(t)}$ can be understood as an element of a manifold $\mathcal{M}_{\chi} \subseteq \mathcal{H}$, solely defined by the set of bond dimensions $\chi$. As the bond dimensions increase, this manifold covers a larger part of the Hilbert space such that $\mathcal{M}_{\chi} \subset \mathcal{M}_{\chi'} \subseteq \mathcal{H}$ \cite{Haegeman_2013} for $\chi < \chi'$. Generally, time-evolution of some state with the \emph{time-dependent Schrödinger equation} (TDSE) according to some Hamiltonian $H \in \mathcal{B}(\mathcal{H})$ leads to a growth of the bond dimensions, which severely limits the computational efficiency.

By utilizing the manifold picture of MPS, we get a powerful method known as the \emph{time-dependent variational principle}  (TDVP). Rather than allowing the bond dimension to grow, this method projects the Hamiltonian to the tangent space of the current MPS manifold before carrying out the time evolution \cite{Haegeman_2016} as visualized in \cref{fig:TDVP_Manifold}.
In a more precise form, TDVP solves the projected TDSE
\begin{equation}
    \frac{d}{dt} \ket{\Psi(t)} = -i  P_{\mathcal{M}_{\chi},\ket{\boldsymbol{\Psi}(t)} } H \ket{\Psi(t)},
    \label{eq:ModifiedTDSE}
\end{equation}
where $P_{\mathcal{M}_{\chi},\ket{\boldsymbol{\Psi}(t)}} \in \mathcal{B}(\mathcal{H})$ is a projector that projects an MPS onto the tangent space of the manifold $\mathcal{M}_{\chi}$ at the point $\ket{\boldsymbol{\Psi}(t)}$.
TDVP offers significant advantages over other time evolution methods, such as \emph{time-evolving block decimation}  (TEBD) \cite{Vidal_2004, Daley_2004, White_2004}, by avoiding Trotter and truncation errors while conserving the system's norm and energy \cite{Haegeman2011, Haegeman_2013}. This benefit is transformed into projection error which results in the optimal representation of an MPS at a lower bond dimension (on the smaller manifold) compared to the sub-optimal truncated MPS 
\cite{Paeckel_2019}.
The simplest form of TDVP known as 1TDVP leads to $2L-1$ coupled local \emph{ordinary differential equations} (ODEs) which correspond to a one-site integration scheme similar to the \emph{density matrix renormalization group} (DMRG)
method \cite{Schollwock_2011}. This can then be extended to an $n$-site integration scheme \cite{Haegeman_2016}, allowing for the simultaneous integration of neighboring 
sites. Higher-order $n$TDVP reduces projection error, such that if a Hamiltonian has $n'$-body interactions, no projection error occurs for $n' \leq n$, although at the cost of higher computational complexity 
\cite{Paeckel_2019}.

\subsubsection{Coupled ordinary differential equations}
For 1TDVP, the projection can be expressed as a projector splitting at each site,
\begin{equation}
    P_{\mathcal{M}_{\chi},\ket{\boldsymbol{\Phi}} } = \sum_{\ell=1}^L K_{\ell,\ket{\boldsymbol{\Phi}}} - \sum_{\ell=1}^{L-1} G_{\ell,\ket{\boldsymbol{\Phi}}},
    \label{eq:ProjectorSummation}
\end{equation}
which always depend on the current MPS $\ket{\boldsymbol{\Phi}(t)}$, where 
\begin{equation}
    K_{\ell,\ket{\boldsymbol{\Phi}}} = \ket*{\boldsymbol{\Phi}_{\ell-1}^L} \bra*{\boldsymbol{\Phi}_{\ell-1}^L} \otimes I_\ell \otimes \ket*{\boldsymbol{\Phi}_{\ell+1}^R} \bra*{\boldsymbol{\Phi}_{\ell+1}^R},
    \label{eq:1TDVP_Krylov}
\end{equation}
represents the Krylov projectors \cite{Yang_2020} of the MPS $\ket{\boldsymbol{\Phi}}$ that fix the orthogonality center at site \( \ell \),
and where
\( G_{\ell,\ket{\boldsymbol{\Phi}}} \) 
are defined as
\begin{equation}
    G_{\ell,\ket{\boldsymbol{\Phi}}} = \ket*{\boldsymbol{\Phi}_{\ell}^L} \bra*{\boldsymbol{\Phi}_{\ell}^L} \otimes \ket*{\boldsymbol{\Phi}_{\ell+1}^R} \bra*{\boldsymbol{\Phi}_{\ell+1}^R}.
    \label{eq:1TDVP_Filter}
\end{equation}
We have dropped the time parameter $t$ for ease of notation.
Here, the left bipartition of the MPS $\ket{\boldsymbol{\Phi}}$ around the orthogonality center \( \ell \) is expressed as
\begin{equation}
    \ket*{{\boldsymbol{\Phi}_\ell^L}} = \sum_{\sigma_1, \dots, \sigma_\ell=1}^d M^{\sigma_1}_1 \dots M^{\sigma_\ell}_\ell \ket{\sigma_1, \dots, \sigma_\ell},
\end{equation}
and the right bipartition is given by
\begin{equation}
    \ket*{{\boldsymbol{\Phi}_\ell^R}} = \sum_{\sigma_{\ell}, \dots, \sigma_L=1}^d M^{\sigma_{\ell}}_{\ell} \dots M^{\sigma_L}_L \ket{\sigma_{\ell+1}, \dots, \sigma_L},
\end{equation}
where $M_{\ell}$ with $\ell=1, \dots, L$ are the site tensors of $\ket{\boldsymbol{\Phi}}$.
When we substitute the definition of \( P_{\mathcal{M}_\chi, \ket{\boldsymbol{\Phi}}} \) into \cref{eq:ModifiedTDSE}, we derive a set of coupled local \emph{ordinary differential equations} (ODEs) 
that describe the evolution of the state vector  $\ket{\boldsymbol{\Phi}}$, where we use the bold notation $\boldsymbol{H_0}$ to denote the system Hamiltonian in MPO format
as
\begin{equation}
    \frac{d}{dt} \ket{\boldsymbol{\Phi}} = -i \sum_{\ell=1}^L K_{\ell,\ket{\boldsymbol{\Phi}}} \boldsymbol{H_0} \ket{\boldsymbol{\Phi}} + i \sum_{\ell=1}^{L-1} G_{\ell,\ket{\boldsymbol{\Phi}}} \boldsymbol{H_0} \ket{\boldsymbol{\Phi}}.
    \label{eq:ProjectorTDSE}
\end{equation}
This is then split into $L$ forward-evolving terms given by 
\begin{equation}
    \frac{d}{dt} \ket{\boldsymbol{\Phi}} = -i  K_{\ell,\ket{\boldsymbol{\Phi}}} \boldsymbol{H_0} \ket{\boldsymbol{\Phi}},
    \label{eq:1TDVP_ForwardODE}
\end{equation} 
and $L-1$ backward-evolving terms
\begin{equation}
    \frac{d}{dt} \ket{\boldsymbol{\Phi}} = +i G_{\ell,\ket{\boldsymbol{\Phi}}} \boldsymbol{H_0} \ket{\boldsymbol{\Phi}},
    \label{eq:1TDVP_BackwardODE}
\end{equation}

\subsection{Computational effort of running the simulation}\label{Methods:TJMComplexity1}
After having presented the basics of tensor network simulations, we now derive original material on the computational effort of running a simulation with the \emph{tensor jump method (TJM)} as developed in this work.
We can analyze the total complexity of the TJM by breaking it down into its constituent components. For this, we have a given number of trajectories $N$ and a total number $n$ of time steps to perform (regardless of the size of $\delta t$ itself).
The complexity of calculating $N$ trajectories with $n$ time steps with time step size $\delta t$ depends on the complexity of the dissipative contraction $\mathcal{D}$, the calculation of the probability distribution of the jump operators $L_m,m=1, \dots, k$, and the complexity of a one-site TDVP step \cite{dunnett2021mpsdynamics}. 

The unitary time-evolution $\mathcal{U}$ according to 1TDVP has a complexity of 
\begin{equation}
    \mathcal{O}(L [ d^2 D^2 \chi_{\text{max}}^2 + d D \chi_{\text{max}}^3  + d^2 \chi_{\text{max}}^3 ]),
\end{equation}
where $D$ is the maximum bond dimension of the Hamiltonian MPO. 
Since the majority of time steps is performed with 1TDVP and 2TDVP is not performed with maximum bond dimension, the scaling of 2TDVP is not relevant for the TJM. Additionally, the complexity of 1TDVP and 2TDVP only differs in terms of the local dimension $d$, which scales quadratic in $d$ for 1TDVP and cubic in 2TDVP.
Following this, the dissipative sweep $\mathcal{D}$ scales according to
\begin{equation}
    \mathcal{O}(L d^2 \chi^2_{\text{max}}).
\end{equation}

The complexity of the stochastic process $\mathcal{J}$ is determined by three parts: the calculation of the probability distribution, the sampling of a jump operator (which is in $\mathcal{O}(1)$ and therefore negligible), and its application. This leads to a total complexity of
\begin{equation}
    \mathcal{O}(k d \chi_{\text{max}}^3  + d^2 \chi^2_{\text{max}}),
\end{equation}
for $k$ total jump operators $L_m,m=1, \dots, k$.
Putting all this together, we end up with a total complexity of the TJM in
\begin{equation*}
    \begin{split}
        \mathcal{O} \bigg( & N n \bigl[ L (d^2 D^2 \chi_{\text{max}}^2  + d D \chi_{\text{max}}^3  + d^2 \chi_{\text{max}}^3 \\ 
        & + d^2 \chi^2_{\text{max}}) + k d \chi_{\text{max}}^3   + d^2 \chi^2_{\text{max}} \bigr] \bigg),
    \end{split}
\end{equation*}
 where the dominant terms come from the TDVP sweep and the calculation of the probability distribution. Since the number of jump operators $k$ is related to the system size $L$, e.g., a fixed number of jump operators per site, we can rewrite the complexity using $k= \alpha L$ as
 \begin{equation*}
    \begin{split}
        \mathcal{O} \bigg( & N n L \chi_{\text{max}}^3 \bigl[d D  + d^2 + \alpha d \bigr] \bigg),
    \end{split}
\end{equation*}
where $\alpha \in \mathbb{N}$. However, since we assume jumps only occur rarely, the TDVP sweep dominates the stochastic process such that this reduces to
 \begin{equation}
    \begin{split}
        \mathcal{O} \bigg( & N n L \chi_{\text{max}}^3 \bigl[d D  + d^2\bigr] \bigg),
    \end{split}
\end{equation}
where we assume $d, D \ll \chi_{\text{max}}$.

In comparison, the runtime to solve the Lindblad equation directly scales as $\mathcal{O}(n d^{6L})$ when using the superoperator formalism \cite{Manzano_2019}. The MCWF method, with the same assumptions made for the TJM, requires $\mathcal{O}(Nn d^{3L})$ and a Lindblad MPO requires $\mathcal{O}(nLd^4D_H^2D_s^2)$ such that $D_s$ is the bond dimension of the MPO representing the density matrix (where we expect $D_s \gg \chi_{\text{max}}$), and $D_H$ is the bond dimension of the Hamiltonian MPO format using the $W^{II}$ algorithm (analogous to $\chi_{\text{max}}$ and $D$ in the TJM, respectively) \cite{Zaletel_2015}.


\subsection{Resources required for storing the results}
\label{Methods:TJMComplexity2}
For a given maximum bond dimension $\chi_{\text{max}}$, the memory complexity to store $N$ MPS trajectories is given by
   $ \mathcal{O}(N L d \chi_{\text{max}}^2)$.
%
Exactly solving the Lindblad equation in matrix format requires storing the density matrix and all other operators in its full dimension. When stored as a complex-valued  matrix, $\rho$ has a memory complexity $\mathcal{O}(d^{2L})$ such that it scales exponentially for increasing $L$. Storing $N$ trajectories $\ket{\Psi} \in \mathbb{C}^{d^L}$ reduces the memory complexity from $\mathcal{O}(d^{2L})$ to $\mathcal{O}(N d^{L})$ compared to the Lindbladian approach. Meanwhile, an MPO simulation requires $\mathcal{O}( L d^2 D_s^2 )$ to store the time-evolved density matrix.

We can compare the TJM and the MCWF complexities to determine the $\chi_{\text{max}}$ necessary for the TJM to be more compact. This results in the requirement that
\begin{equation}
\chi_{\text{max}} < \sqrt{\frac{d^L}{L d}}.
\end{equation}
Since $\sqrt{{d^L}/({L d})} \rightarrow \infty$ as $L \rightarrow \infty$, we see that for large system sizes the TJM is always more compact than the MCWF. When compared against the MPO, we similarly see that the TJM is more compact if
\begin{equation}
    \chi_{\text{max}} < D_s \sqrt{\frac{d}{N}},
\end{equation}
where we expect $D_s \gg \chi_{\text{max}}$ for large-scale applications, particularly those with long timescales.

\subsection{Resources required for calculating expectation values}
\label{Methods:TJMComplexity3}
Naturally, we can also use the stored states to solve expectation values. If we sample a local observable $O \in \mathcal{B}(\mathcal{H})$ at a time step $t$, each method has distinct requirements to calculate $\langle O(t) \rangle$.
The MPS trajectories from the TJM can be used to calculate the expectation value of an MPO by performing $N$ MPS-MPO-MPS contractions. This results in a complexity of $\mathcal{O}(N L (d D \chi_{\text{max}}^3 + d^2 D^2 \chi_{\text{max}}^2) )$ for an MPO with max bond dimension $D$ \cite{dunnett2020dynamically, Paeckel_2019}. With the assumption that $\chi_{\text{max}} > d^2, D^2$ in most cases this is dominated by the cubic term, resulting in complexity
\begin{equation}
    \mathcal{O}(N L d D \chi_{\text{max}}^3).
\end{equation}

In comparison, using the density matrix $\rho(t)$ from the Lindbladian, calculating $\langle O(t) \rangle = \text{Tr}[\rho(t) O]$ has complexity $\mathcal{O}(d^{6L})$.
Replacing the density matrix with the trajectories of the MCWF reduces this to $\mathcal{O}(N d^{4L})$ while an MPO requires $\mathcal{O}(L d^2 D_s^3)$.

\onecolumngrid  
\appendix

\section{Proof of equivalence of MCWF and Lindblad master equation} \label{appendix:QJM-Lindblad}
\begin{theorem}[Equivalence of MCWF and Lindblad master equation] \label{theorem:LindbladMCWFequivalenceAppendix}

Given the Lindblad master equation as in Eq.\  (1) with solution \(\rho(t)\), and the MCWF Hamiltonian defined as 
\begin{equation}H = H_0 - \frac{i\hbar}{2}\sum_{m=1}^k \gamma_m L_m^\dag L_m,\end{equation}
consider the following:
Let \(\ket{\psi_i(t)}\) for \(i = 1, \dots, N\) be state vector trajectories sampled from the initial state \(\rho(0)\). The average of the outer products of these sampled pure states at time \(t\) is given by
\begin{equation}
\bar{\mu}_N(t) = \frac{1}{N} \sum_{i=1}^{N} \ket{\psi_i(t)} \bra{\psi_i(t)}.
\end{equation}
If the time step \(\delta t\) converges to 0, it holds that
\begin{equation}
\rho(t) = \lim_{N \to \infty} \bar{\mu}_N(t) \qquad \forall t.
\end{equation}
\end{theorem}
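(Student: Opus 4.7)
The plan is to decouple the two limits: first invoke the law of large numbers to replace $\bar{\mu}_N(t)$ by the single-trajectory expectation $\sigma(t) := \mathbb{E}[\ket{\psi(t)}\bra{\psi(t)}]$, then show that $\sigma(t)$ solves the Lindblad master equation \cref{eq:Lindblad}. Since $\mathcal{L}$ generates a well-defined semigroup on the finite-dimensional space $\mathcal{B}(\mathcal{H})$ and the initial sampling gives $\sigma(0) = \rho(0)$, uniqueness of the linear ODE will force $\sigma(t) = \rho(t)$ for every $t \in [0,T]$.

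The heart of the argument is a one-step recursion for $\sigma$. Conditioning on $\ket{\psi(t)}$, the MCWF update produces two branches: with probability $1 - \delta p(t)$ the state becomes $(1 - iH\,\delta t)\ket{\psi(t)}/\sqrt{1-\delta p(t)}$, and with probability $\delta p_m(t)$ it becomes $\sqrt{\gamma_m}\,L_m\ket{\psi(t)}/\sqrt{\delta p_m(t)/\delta t}$. Forming the rank-one projector in each branch, the normalization denominators cancel the branch probabilities, so that the conditional expectation reduces to
\begin{equation*}
\mathbb{E}\bigl[\ket{\psi(t+\delta t)}\bra{\psi(t+\delta t)} \,\big|\, \ket{\psi(t)}\bigr] = (1 - iH\,\delta t)\,\rho_\psi\,(1 + iH^\dagger \delta t) + \delta t \sum_{m=1}^k \gamma_m\, L_m \rho_\psi L_m^\dagger + \mathcal{O}(\delta t^2),
\end{equation*}
with $\rho_\psi := \ket{\psi(t)}\bra{\psi(t)}$. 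Substituting $H = H_0 - \tfrac{i}{2}\sum_m \gamma_m L_m^\dagger L_m$ and its adjoint, the first-order-in-$\delta t$ piece collapses to $-i[H_0, \rho_\psi] - \tfrac{1}{2}\sum_m \gamma_m \{L_m^\dagger L_m, \rho_\psi\}$, and combined with the jump term this is exactly $\mathcal{L}[\rho_\psi]$. Taking the outer expectation over $\ket{\psi(t)}$ and using linearity of $\mathcal{L}$, one obtains the discrete evolution $\sigma(t+\delta t) = \sigma(t) + \delta t\,\mathcal{L}[\sigma(t)] + \mathcal{O}(\delta t^2)$.

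Sending $\delta t \to 0$ gives $\dot{\sigma} = \mathcal{L}[\sigma]$, which together with $\sigma(0) = \rho(0)$ identifies $\sigma(t) = \rho(t)$ via Picard--Lindel\"of uniqueness. The $N \to \infty$ step is then routine: the random matrices $\ket{\psi_i(t)}\bra{\psi_i(t)}$ are independent, identically distributed, and uniformly bounded in operator norm, so the strong law of large numbers yields entrywise convergence of $\bar{\mu}_N(t)$ to $\sigma(t)$. Combining the two limits closes the argument.

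The delicate point is justifying that the discretized recursion actually converges to the continuous semigroup, since the per-step $\mathcal{O}(\delta t^2)$ error has to be summed over $t/\delta t$ steps. This will be handled by a standard Gr\"onwall estimate, using boundedness of $\mathcal{L}$ on the finite-dimensional operator space to control the accumulated error uniformly by a constant times $\delta t$; alternatively, one recognises the MCWF as a piecewise-deterministic Markov process in the sense of Davis and appeals to the existing convergence theory for its infinitesimal generator. Quantitative control of the Monte Carlo leg $\bar{\mu}_N(t) \to \sigma(t)$ at rate $1/\sqrt{N}$ is then supplied separately by \cref{theorem:ConvergenceTJM}.
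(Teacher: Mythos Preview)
Your proposal is correct and follows essentially the same approach as the paper: both compute the one-step conditional expectation of $\ket{\psi}\bra{\psi}$, observe that the branch probabilities cancel against the normalization denominators, expand to first order in $\delta t$, substitute the non-Hermitian $H$ to recover $\mathcal{L}[\rho_\psi]$, and then invoke the law of large numbers for the $N\to\infty$ limit. Your treatment is somewhat more careful about the $\delta t\to 0$ limit (explicitly separating the two limits, invoking Picard--Lindel\"of for uniqueness, and noting the Gr\"onwall/PDMP argument for the accumulated discretization error), whereas the paper simply passes to the formal limit of the difference quotient.
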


\begin{proof}
For \(N \to \infty\), a time-evolved state is described by combining the possibilities of a jump occurring or not such that
\begin{equation}
\begin{aligned}
    \bar{\mu}(t + \delta t) &= (1 - \delta p) \frac{U(\delta t)\bar{\mu}(t)U^\dag(\delta t)}{\sqrt{1 - \delta p} \sqrt{1 - \delta p}} 
    +  \sum_{m=1}^k \delta p_m \frac{L_m \bar{\mu}(t) L_m^\dag}{\sqrt{\delta p_m} \sqrt{\delta p_m} / (\gamma_m \delta t)} \\
    &= \bar{\mu}(t) - i  H \delta t \bar{\mu}(t)
    + \bar{\mu}(t) i  H^\dag \delta t + \delta t \sum_{m=1}^k \gamma_m L_m \bar{\mu}(t) L_m^\dag + \mathcal{O}(\delta t^2),
\end{aligned}
\end{equation}
where we have used the definition of the matrix exponential up to the second summand $U(\delta t)= e^{-i \delta t H}= 1 - i H \delta t + \mathcal{O}(\delta t^2)$. 
Taking the derivative such that the LHS of the master equation is created, it follows that
\begin{equation}
\begin{aligned}
    \frac{d}{dt} \bar{\mu}(t) &= \underset{\delta t \to 0}{\lim} \frac{\bar{\mu}(t + \delta t) - \bar{\mu}(t)}{\delta t} \\
    &= \underset{\delta t \to 0}{\lim} \left( -i  H \bar{\mu}(t) + \bar{\mu}(t) i  H^\dag  
    \quad + \sum_{m=1}^k \gamma_m L_m \bar{\mu}(t) L_m^\dag + \mathcal{O}(\delta t) \right) \\
    &= -i H \bar{\mu}(t) + \bar{\mu}(t) i  H^\dag  
    \quad + \sum_{m=1}^k \gamma_m L_m \bar{\mu}(t) L_m^\dag + \underset{\delta t \to 0}{\lim}(\mathcal{O}(\delta t)) \\
    &= -i H \bar{\mu}(t) + \bar{\mu}(t) i  H^\dag  
    \quad + \sum_{m=1}^k \gamma_m L_m \bar{\mu}(t) L_m^\dag \\
    &= -i \left( H_0 - \frac{i \hbar}{2} \sum_{m=1}^k \gamma_m L_m^\dag L_m \right) \bar{\mu}(t)
    + \bar{\mu}(t) i  \left( H_0 + \frac{i \hbar}{2} \sum_{m=1}^k \gamma_m L_m^\dag L_m \right) 
    + \sum_{m=1}^k \gamma_m L_m \bar{\mu}(t) L_m^\dag \\
    &= -i  [H_0, \bar{\mu}(t)] - \sum_{m=1}^k \gamma_m \left( L_m \bar{\mu}(t) L_m^\dag - \frac{1}{2} \left\{ L_m^\dag L_m, \bar{\mu}(t) \right\} \right).
\end{aligned}
\end{equation}
This has the form of the RHS of the master equation such that for a sufficiently small time step $\delta t$ we have that
\begin{equation}
    \rho(t) = \lim_{N \to \infty} \frac{1}{N} \sum_{j=1}^{N} \ket{\psi_j(t)} \bra{\psi_j(t)}, \quad \forall t.
\end{equation}
Since for a fixed $t \in [0,T]$ every sample $\ket{\psi_j(t)}, j=1, \dots, N$ is independent and identically distributed, it follows from the law of large numbers that $\rho(t)=\mathbb{E}[\overline{\mu}_N(t)]$ for all $N,t$. 
\end{proof}

\section{Frobenius variance} 
\label{Proof:ConvergenceTJM}
We present a lemma that is helpful in proving Monte Carlo convergence in the main text. 

\begin{lemma}[Frobenius variance]\label[lemma]{lemma:FrobeniusVariance}
Let \( X, Y \in \mathbb{C}^{n \times n} \) be uncorrelated random matrices with \( \mathbb{E}[X] = \mathbb{E}[Y] = A \) and the Fobenius norm for a squared complex matrix $A \in \mathbb{C}^{n,n}$ be given as $\|A\|_F=\sqrt{\text{Tr}(A^{\dag}A)} = \sum_{i,j}|a_{i,j}|^2$. Then, the variance according to the Frobenius norm is given by  
\begin{equation}
    \mathbb{V}_F[X] = \mathbb{E} \left[ \| X - \mathbb{E}[X]  \|_F^2 \right],
\end{equation}
and it holds true that
\begin{enumerate}[i)]
     \item \( \mathbb{V}_F(X + Y) = \mathbb{V}_F(X) + \mathbb{V}_F(Y) \),
     \item for any scalar \( a \in \mathbb{R} \), \( \mathbb{V}_F(aX) = a^2 \mathbb{V}_F(X) \).
\end{enumerate}
\end{lemma}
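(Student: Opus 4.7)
The plan is to reduce both identities to scalar-valued variance identities by exploiting the entrywise structure of the Frobenius norm. Since $\|M\|_F^2 = \sum_{i,j}|m_{i,j}|^2$ decomposes as a sum of squared moduli of the entries, linearity of expectation gives
\begin{equation}
\mathbb{V}_F[X] = \mathbb{E}\Bigl[\sum_{i,j}|x_{i,j}-a_{i,j}|^2\Bigr] = \sum_{i,j}\mathrm{Var}(x_{i,j}),
\end{equation}
where $\mathrm{Var}(x_{i,j}) := \mathbb{E}[|x_{i,j}-a_{i,j}|^2]$ is the ordinary (complex) scalar variance of the $(i,j)$-entry and $a_{i,j}$ is the $(i,j)$-entry of $A=\mathbb{E}[X]$. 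Each claim then becomes an elementary scalar statement summed over $(i,j)$.

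For (ii) I would appeal to the absolute homogeneity of the Frobenius norm: for real $a$,
\begin{equation}
\|aX - a\mathbb{E}[X]\|_F^2 = a^2\,\|X-\mathbb{E}[X]\|_F^2,
\end{equation}
so taking the expectation yields $\mathbb{V}_F(aX) = a^2\,\mathbb{V}_F(X)$ directly — no appeal to independence or uncorrelatedness is needed.

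For (i), I would observe that $\mathbb{E}[X+Y]=2A$ and expand
\begin{equation}
\|X+Y-2A\|_F^2 = \sum_{i,j}\Bigl[|x_{i,j}-a_{i,j}|^2 + |y_{i,j}-a_{i,j}|^2 + 2\,\mathrm{Re}\bigl((x_{i,j}-a_{i,j})\overline{(y_{i,j}-a_{i,j})}\bigr)\Bigr].
\end{equation}
Taking expectation, the first two sums reproduce $\mathbb{V}_F(X) + \mathbb{V}_F(Y)$, while the cross terms vanish because $X$ and $Y$ are uncorrelated. The one subtlety I expect is pinning down what \emph{uncorrelated} means for complex matrix-valued random variables: the cleanest reading, which is precisely what the computation requires, is that $\mathbb{E}[(x_{i,j}-a_{i,j})\overline{(y_{k,\ell}-a_{k,\ell})}] = 0$ for every quadruple $(i,j,k,\ell)$. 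Under this (standard) convention the argument collapses to scalar facts and the rest is bookkeeping; no further obstacle arises.
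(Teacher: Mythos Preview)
Your proposal is correct and follows essentially the same route as the paper: both arguments reduce the Frobenius variance to an entrywise sum of scalar variances, use absolute homogeneity of $\|\cdot\|_F$ for (ii), and for (i) expand $\|X+Y-2A\|_F^2$ entrywise and kill the cross terms $2\,\mathrm{Re}\bigl((x_{i,j}-a_{i,j})\overline{(y_{i,j}-a_{i,j})}\bigr)$ via uncorrelatedness. The only minor remark is that the paper invokes uncorrelatedness only for matching entries $(i,j)$, whereas you state the condition for all quadruples $(i,j,k,\ell)$; your hypothesis is slightly stronger than what the computation actually needs, but this does not affect the validity of the argument.
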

The proof of \cref{lemma:FrobeniusVariance} is straightforward and is presented subsequently.

\begin{definition}[Density matrix variance and standard deviation]
    Let \(\|\cdot\|\) be a matrix norm, and let \(X \in \mathbb{C}^{n \times n}\) be a matrix-valued random variable defined on a probability space \((\Omega, \mathcal{F}, \mathbb{P})\), where \(\mathbb{P}\) is a probability measure. The variance of \(X\) with respect to the norm \(\|\cdot\|\) is defined as
    \begin{equation}
        \mathbb{V}[X] = \mathbb{E} \left[ \| X - \mathbb{E}[X] \|^2 \right],
    \end{equation}
    where \(\mathbb{E}[X]\) denotes the expectation of \(X\). The expectation \(\mathbb{E}[X]\) is computed entrywise with each entry being the expectation according to the respective marginal distributions of the entries. Specifically, for each \(i,j\in\{1,\ldots,n\}\),
    \begin{equation}
        \mathbb{E}[X]_{i,j} = \mathbb{E}_{\mathbb{P}_{i,j}}[x_{i,j}],
    \end{equation}
    where \(x_{i,j}\) is the \((i,j)\)-th entry of the matrix \(X\), and \(\mathbb{P}_{i,j}\) is the marginal distribution of \(x_{i,j}\) induced by \(\mathbb{P}\). The expectation value of the norm of a matrix $\mathbb{E}[\| \cdot\|]$ is defined as the multidimensional integral over the function $\| \cdot \| : \mathbb{C}^{n,n} \mapsto \mathbb{R}$ according to its marginal distributions \(\mathbb{P}_{i,j}\).
    The standard deviation of \(X\) with respect to the norm \(\|\cdot\|\) is then defined as
    \begin{equation}
        \sigma(X) = \sqrt{\mathbb{V}[X]}=\sqrt{\mathbb{E} \left[ \| X - \mathbb{E}[X] \|^2 \right]}.
    \end{equation}
\end{definition}
%
In what follows, we make use of the
Frobenius norm, defined for a squared complex matrix $A \in \mathbb{C}^{n,n}$ as $\|A\|_F:=\sqrt{\text{Tr}(A^{\dag}A)} = \sum_{i,j}|a_{i,j}|^2$. 

\begin{lemma}[Frobenius variance]
Let  \( X, Y \in \mathbb{C}^{n,n} \) be uncorrelated random matrices with \( \mathbb{E}[X] = \mathbb{E}[Y] = A \).
Then the variance according to the Frobenius norm is given as   
\begin{equation}
    \mathbb{V}_F[X] = \mathbb{E} \left[ \| X - \mathbb{E}[X]  \|_F^2 \right],
\end{equation}
and it holds true that \\
\begin{enumerate}[i)]
    \item \( \mathbb{V}_F(X + Y) = \mathbb{V}_F(X) + \mathbb{V}_F(Y) \),
    \item for any scalar \( a \in \mathbb{R} \), \( \mathbb{V}_F(aX) = a^2 \mathbb{V}_F(X) \).
\end{enumerate}
\end{lemma}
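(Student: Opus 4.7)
My plan is to reduce the matrix-valued Frobenius variance to a sum of scalar entrywise variances, after which both claims follow from the classical scalar Bienaymé identity. The key observation is that the Frobenius norm squared is itself a sum over matrix entries,
\begin{equation*}
\|X - \mathbb{E}[X]\|_F^2 \;=\; \sum_{i,j=1}^n \bigl|X_{i,j} - \mathbb{E}[X]_{i,j}\bigr|^2,
\end{equation*}
so linearity of expectation yields
\begin{equation*}
\mathbb{V}_F[X] \;=\; \sum_{i,j=1}^n \mathbb{E}\!\left[\bigl|X_{i,j} - \mathbb{E}[X_{i,j}]\bigr|^2\right] \;=\; \sum_{i,j=1}^n \mathbb{V}[X_{i,j}],
\end{equation*}
where $\mathbb{V}[X_{i,j}]$ is the usual scalar variance of the complex random entry $X_{i,j}$. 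This single identity is the workhorse of the whole proof.

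From the entrywise decomposition, property (ii) is immediate: for real $a$, $\mathbb{V}[a X_{i,j}] = a^2\,\mathbb{V}[X_{i,j}]$ is a standard scalar fact, and summing over $(i,j)$ gives $\mathbb{V}_F[aX] = a^2\,\mathbb{V}_F[X]$. For property (i), I would interpret the ``uncorrelated'' hypothesis on the matrices $X,Y$ in the natural entrywise sense, namely that $\mathbb{E}\bigl[(X_{i,j}-A_{i,j})\overline{(Y_{i,j}-A_{i,j})}\bigr]=0$ for every $(i,j)$; this is the only interpretation that makes the lemma usable in the convergence argument of \cref{theorem:ConvergenceTJM}, where it is applied to independent trajectories. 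Under this interpretation the scalar identity $\mathbb{V}[X_{i,j}+Y_{i,j}] = \mathbb{V}[X_{i,j}] + \mathbb{V}[Y_{i,j}]$ holds entrywise, and summation over $(i,j)$ yields $\mathbb{V}_F(X+Y) = \mathbb{V}_F(X) + \mathbb{V}_F(Y)$.

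The only subtle point, and therefore the ``hard part'' if one wishes to be rigorous, is the definitional one: the paper does not spell out what uncorrelated means for matrix-valued random variables, so the argument depends on pinning down the entrywise interpretation above. Once that is fixed, no analytic difficulty remains; both (i) and (ii) are two-line consequences of the scalar case applied term by term. I would therefore open the proof with the entrywise identity, state the definitional convention for ``uncorrelated,'' and conclude with the two one-line reductions.
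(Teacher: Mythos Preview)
Your proposal is correct and follows essentially the same route as the paper: both arguments reduce the Frobenius variance to an entrywise sum and kill the cross term in (i) via the entrywise uncorrelation assumption, which the paper also adopts implicitly. The only cosmetic difference is that for (ii) the paper factors $a$ out of the Frobenius norm at the matrix level, $\|aX-aA\|_F=|a|\,\|X-A\|_F$, rather than passing through the entrywise decomposition, but this is the same one-line fact.
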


\begin{proof}
First, we prove that \( \mathbb{V}_F[X + Y] = \mathbb{V}_F[X] + \mathbb{V}_F[Y] \) when \(X\) and \(Y\) are uncorrelated random matrices. The variance for a random matrix \(X\) with respect to the Frobenius norm is defined as 
\begin{equation}
\mathbb{V}_F[X] = \mathbb{E}[\|X - A\|_F^2] = \mathbb{E}\left[\sum_{i,j}^n |x_{i,j} - a_{i,j}|^2\right],
\end{equation}
where \(A = \mathbb{E}[X]\) and \(a_{i,j}\) are the elements of \(A\). Similarly, the variance for \(Y\) is 
\begin{equation}
    \mathbb{V}_F[Y] = \mathbb{E}[\|Y - A\|_F^2] = \mathbb{E}\left[\sum_{i,j}^n |y_{i,j} - a_{i,j}|^2\right].
\end{equation}
To find the variance of \(X + Y\), note that \( \mathbb{E}[X + Y] = \mathbb{E}[X] + \mathbb{E}[Y] = 2A \). Therefore,
\begin{align}
    \mathbb{V}_F[X + Y] 
    &= \mathbb{E}\left[\|(X + Y) - \mathbb{E}[X + Y]\|_F^2\right] \\
    &= \mathbb{E}\left[\sum_{i,j}^n |(x_{i,j} + y_{i,j}) - 2a_{i,j}|^2\right].
    \nonumber
\end{align}
Expanding the squared term, we have
\begin{align*}
    |(x_{i,j} + y_{i,j}) - 2a_{i,j}|^2 
    &= |\left((x_{i,j} - a_{i,j}) + (y_{i,j} - a_{i,j})\right)|^2 \\
    &= |x_{i,j} - a_{i,j}|^2 + |y_{i,j} - a_{i,j}|^2 \\
    &\quad + 2\Re\left((x_{i,j} - a_{i,j}) \overline{(y_{i,j} - a_{i,j})}\right).
\end{align*}
Taking the expectation and using the fact that \(X\) and \(Y\) are uncorrelated, we get
\begin{align}
    \mathbb{E}\left[|(x_{i,j} - a_{i,j}) + (y_{i,j} - a_{i,j})|^2\right] 
    &= \mathbb{E}\left[|x_{i,j} - a_{i,j}|^2\right] + \mathbb{E}\left[|y_{i,j}- a_{i,j}|^2\right] \\
    &\quad+ 2\mathbb{E}\left[\Re((x_{i,j} - a_{i,j})\overline{(y_{i,j} - a_{i,j})})\right].\nonumber
\end{align}
As with $X$ and $Y$, $x_{i,j}$ and $y_{i,j}$ are also uncorrelated for all $i,j=1, \dots, n$ and hence \(\mathbb{E}\left[\Re((x_{i,j} - a_{i,j})(y_{i,j} - a_{i,j}))\right] = 0\). Thus, this reduces to
\begin{equation}
    \mathbb{E}\left[(x_{i,j} - a_{i,j})^2\right] + \mathbb{E}\left[(y_{i,j} - a_{i,j})^2\right].
\end{equation}
Summing over all elements \((i,j)\), we get
\begin{align}
    \mathbb{V}_F[X + Y] 
    &= \sum_{i,j}^n \mathbb{E}\left[(x_{i,j} - a_{i,j})^2\right] + \sum_{i,j}^n \mathbb{E}\left[(y_{i,j} - a_{i,j})^2\right] \\
    &= \mathbb{V}_F[X] + \mathbb{V}_F[Y].
    \nonumber
\end{align}
Next, we prove that \( \mathbb{V}_F[aX] = a^2 \mathbb{V}_F[X] \) for any scalar \( a \in \mathbb{R} \). The variance for \(aX\) is
\begin{equation}
    \mathbb{V}_F[aX] = \mathbb{E}[\|aX - \mathbb{E}[aX]\|_F^2].
\end{equation}
Since \(\mathbb{E}[aX] = a \mathbb{E}[X] = aA\), we have
\begin{equation}
    \mathbb{V}_F[aX] = \mathbb{E}\left[\|aX - aA\|_F^2\right].
\end{equation}
Factoring out \(a\) from the Frobenius norm, we get
\begin{equation}
    \|aX - aA\|_F = |a|\|X - A\|_F,
\end{equation}
and thus
\begin{equation}
    \|aX - aA\|_F^2 = a^2 \|X - A\|_F^2.
\end{equation}
Taking the expectation, we obtain
\begin{equation}
    \mathbb{V}_F[aX] = \mathbb{E}[a^2 \|X - A\|_F^2] = a^2 \mathbb{E}[\|X - A\|_F^2] = a^2 \mathbb{V}_F[X].
\end{equation}
Therefore, we have shown that for any scalar \(a \in \mathbb{R}\),
\begin{equation}
    \mathbb{V}_F[aX] = a^2 \mathbb{V}_F[X].
\end{equation}
\end{proof}

\begin{theorem}[Convergence of TJM]
    Let $d \in \mathbb{N}$ be the physical dimension and $L \in \mathbb{N}$ be the number of sites in the open quantum system described by the Lindblad master equation \cref{eq:Lindblad}. Furthermore, let $\boldsymbol{\rho}_N(t)= \frac{1}{N} \sum_{j=1}^N \ket{\boldsymbol{\Psi_j}(t)}\bra{\boldsymbol{\Psi_j}(t)}$ be the positive semi-definite approximation of the solution $\rho(t)$ of the Lindblad master equation in MPO format at time $t \in [0,T]$ for some ending time $T > 0$ and $N \in \mathbb{N}$ trajectories, where $\ket{\boldsymbol{\Psi_j}(t)}$ is a trajectory sampled according to the TJM in MPS format of full bond dimension. Then, the expectation value of the approximation of the corresponding density matrix $\rho_N(t) \in \mathbb{C}^{d^L,d^L}$ is given by $\rho(t)$ and there exists a $c > 0$ such that the standard deviation of $\rho_N(t)$ can be upper bounded by  
    \begin{equation}
        \sigma(\rho_N(t))\leq \frac{c}{\sqrt{N}}
    \end{equation}
    for all matrix norms $\|\cdot\|$ defined on $\mathbb{C}^{d^L,d^L}$.
\end{theorem}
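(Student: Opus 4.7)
The plan is to follow the sketch already indicated in the main text, but to fill in the three pieces of structure that make it work: unbiasedness of the estimator, variance reduction from independent trajectories, and transfer from Frobenius norm to an arbitrary matrix norm. Throughout I would work in state vector and density matrix format rather than with the bold MPS/MPO objects, because the assumption of full bond dimension makes the two representations coincide exactly, so no approximation error is incurred by the tensor truncation manifold.

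First I would invoke \cref{theorem:LindbladMCWFequivalenceAppendix} to conclude that a single TJM trajectory, viewed as a stochastic process in projective Hilbert space, reproduces the Lindbladian semigroup in expectation, i.e.\ $\mathbb{E}[\ket{\Psi_j(t)}\bra{\Psi_j(t)}] = \rho(t)$ for every $j$ and every $t \in [0,T]$. Since the trajectories are drawn i.i.d.\ and the estimator $\rho_N(t)$ is an equally weighted average, linearity of expectation gives $\mathbb{E}[\rho_N(t)] = \rho(t)$ directly. This takes care of the unbiasedness claim.

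Next I would turn to the variance bound. Using \cref{lemma:FrobeniusVariance}, both additivity across uncorrelated summands and the quadratic scaling under scalar multiplication, the i.i.d.\ property of the $N$ trajectories yields
\begin{equation}
    \mathbb{V}_F[\rho_N(t)] \;=\; \frac{1}{N^2}\sum_{j=1}^N \mathbb{V}_F\!\left[\ket{\Psi_j(t)}\bra{\Psi_j(t)}\right] \;=\; \frac{1}{N}\,\mathbb{V}_F\!\left[\ket{\Psi_1(t)}\bra{\Psi_1(t)}\right].
\end{equation}
The remaining task is an a priori bound on a single-trajectory variance. Since each sampled trajectory is renormalized after every step of the TJM, $\ket{\Psi_1(t)}$ is a unit vector and the rank-one projector $\ket{\Psi_1(t)}\bra{\Psi_1(t)}$ has Frobenius norm exactly one; consequently $\|\ket{\Psi_1(t)}\bra{\Psi_1(t)} - \rho(t)\|_F \le \|\ket{\Psi_1(t)}\bra{\Psi_1(t)}\|_F + \|\rho(t)\|_F \le 2$ by the triangle inequality, so $\mathbb{V}_F[\ket{\Psi_1(t)}\bra{\Psi_1(t)}] \le 4$ uniformly in the system size $d^L$ and in $t$. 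Taking square roots gives $\sigma_F(\rho_N(t)) \le 2/\sqrt{N}$.

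Finally, to promote the bound from the Frobenius norm to an arbitrary matrix norm $\|\cdot\|$ on $\mathbb{C}^{d^L \times d^L}$, I would appeal to the equivalence of norms on the finite-dimensional space of complex $d^L \times d^L$ matrices: there exists $c_2 = c_2(d,L,\|\cdot\|) > 0$ such that $\|A\| \le c_2 \|A\|_F$ for all $A$. Squaring and taking expectations commutes with this pointwise inequality, so $\sigma(\rho_N(t)) \le c_2 \,\sigma_F(\rho_N(t)) \le (2 c_2)/\sqrt{N}$, and the claim follows with $c = 2 c_2$. The main subtlety, and the only place where I would be careful, is precisely this last step: the constant $c$ may in general carry norm equivalence factors that depend on $d^L$, so the $1/\sqrt{N}$ rate itself is system-size independent but the prefactor need not be unless one restricts to the Frobenius norm (where the uniform bound $c = 2$ holds). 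I would therefore state the Frobenius result as the clean dimension-free estimate and present the general norm version as a direct corollary via norm equivalence.
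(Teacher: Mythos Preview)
Your proposal is correct and follows essentially the same route as the paper's own proof: unbiasedness via \cref{theorem:LindbladMCWFequivalenceAppendix}, the $1/N$ variance scaling from \cref{lemma:FrobeniusVariance} applied to i.i.d.\ trajectories, the uniform bound $\mathbb{V}_F\le 4$ from the triangle inequality on unit-trace density matrices, and finally norm equivalence on the finite-dimensional matrix space. Your closing caveat that the constant $c$ in the general-norm statement may carry a $d^L$-dependent norm-equivalence factor, while the Frobenius bound $c=2$ is genuinely dimension-free, is a sharper observation than the paper itself makes explicit.
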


\begin{proof}
For a sufficiently small time step \(\delta t\), it can be shown that the average of the trajectories converges to the solution \(\rho(t)\) of the Lindblad equation as the number \(N\) of trajectories approaches infinity, for all \(t\), namely
\begin{equation}
    \lim_{N \rightarrow \infty} \frac{1}{N}\sum_{j=1}^N \ket{\psi_j(t)}\bra{\psi_j(t)} = \lim_{N \rightarrow \infty} \rho_N(t) = \rho(t), \quad t \in [0,T].
\end{equation}
From \cref{theorem:LindbladMCWFequivalenceAppendix} we know that $\rho(t)=\mathbb{E}[\rho_N(t)]$ for all $N,t$. Additionally, let \(X(t) = X_1(t)\). We know that $\mathbb{V}_F[X(t)]$ is bounded since every realization of $X(t)$ and $\mathbb{E}[X(t)]$ are density matrices, so they have trace 1. Thus, we have that $0\leq \|\rho_1-\rho_2\|_F\leq 2$ for all density matrices $\rho_1,\rho_2 \in \mathcal{B}(\mathcal{H})$ regardless of the system size. Each of the \( N \) summands has the core structure of Eq.~(3) in~\cite{Verstraete_2004} with \( d_k = 1 \), which guarantees the positive semi-definiteness of each term and hence of the full operator.

To get a realization $X(t)$ for a certain $t \in [0,T]$, we simulate a trajectory from $X(0)$ by choosing a sufficiently small discretization $\delta t$ and stop the simulation at $X(t)$. Now it is easy to check that the variance of \(\mathbb{V}_F[\rho_N(t)]\) decreases linearly with \(N\). Concretely,
\begin{align}
    \mathbb{V}_F[\rho_N(t)] &= \mathbb{V}_F\left[\frac{1}{N} \sum_{i=1}^N X_i(t) \right] = \frac{1}{N^2} \mathbb{V}_F\left[ \sum_{i=1}^N X_i(t) \right] \notag \\
    &= \frac{1}{N^2} \sum_{i=1}^N \mathbb{V}_F\left[ X_i(t) \right]= \frac{1}{N} \mathbb{V}_F[X(t)] \\
    &\leq \frac{4}{N},
\end{align}
where we have used that \(\mathbb{V}_F[a X(t)] = a^2 \mathbb{V}_F[X(t)]\), \(\mathbb{V}_F[X_1(t)+ X_2(t)] = \mathbb{V}_F[X_1(t)] + \mathbb{V}_F[X_2(t)]\) and the fact that the \(X_i(t)\) are independent samples and identically distributed. Thus, the Frobenius norm standard deviation is upper bounded by
\begin{equation}
\sigma_F[\rho_N(t)] = \frac{1}{\sqrt{N}} \sigma_F[X(t)] \leq \frac{2}{\sqrt{N}}. 
\end{equation}
By the equivalence of norms on finite vector spaces, there exists \( c_1, c_2 \in \mathbb{R} \) such that \( c_1 \|A\|_F \leq \|A\| \leq c_2 \|A\|_F \) for all complex square matrices \( A \) and all matrix norms $\| \cdot \|$. Thus, the convergence rate $\mathcal{O}(1/\sqrt{N})$ also holds true in trace norm and any other relevant matrix norm and is independent of system size. The proposition follows directly. 
\end{proof}

\section{Complexity of TJM} \label{appendix:TJMComplexity}
For a TJM procedure we sample $N \in \mathbb{N}$ trajectories, each with $n = \frac{T}{\delta t} \in \mathbb{N}$ time steps, where $T \in \mathbb{R}_+$ is the terminal time and $\delta t$ the time step size. Since each time step consists of the calculation of the probability over the jump operators, jump application, TDVP step, and a dissipative contraction, the total complexity can be calculated as 
\begin{equation}
\mathcal{O}(Nn(\text{probability distribution}+ \text{ jump application}+\text{TDVP}+\text{Dissipative contraction})).
\end{equation} 
For the complexity calculation, we consider $\chi_{\text{max}}$ as the maximum bond dimension of the MPS $\boldsymbol{\ket{\Psi(t)}}$, $d$ the dimension of the local Hilbert space, $D$ as the maximum bond dimension of the MPO representing the Hamiltonian $H_0$ of the closed system.

The calculation of the probability distribution can be seen as an efficient sweep across the MPS $\boldsymbol{\ket{\Psi(t)}}$, where at each site $\ell=1, \dots, L$ we have to contract the jump operators $L_j^{[\ell]}, j \in S(\ell)$ into the $\ell$-th site tensor and calculate the inner product of the MPS which requires $\mathcal{O}(k\chi_{\text{max}}^3d)$ operations since it has to be done for every jump operator $L_m, m=1, \dots, k$.
The sampling of an $\epsilon \in [0,1]$, which is uniformly distributed has complexity $\mathcal{O}(1)$. It is like sampling a jump operator according to the distribution $\Pi(t)$ \cite{Walker1977}. Since $L_m$ are single-site operators, the jump application is just a contraction of a matrix in $\mathbb{C}^{d \times d}$ into a site tensor of $\boldsymbol{\ket{\Psi(t)}}$, which takes $\mathcal{O}(\chi_{\text{max}}^2d^2)$ operations.\\

The complexity of the 2TDVP is given by  $\mathcal{O}(L(\chi_{\text{max}}^2 d^3 D^2 + \chi_{\text{max}}^3 d^2 D + \chi_{\text{max}}^3 d^3))$, whereas the single-site version scales with $\mathcal{O}(L(\chi_{\text{max}}^2 d^2 D^2 + \chi_{\text{max}}^3 d D + \chi_{\text{max}}^3 d^2))$. Since in TJM the 2TDVP is not performed with maximum bond dimension $\chi_{\text{max}}$, the 1TDVP complexity is of primary interest. 
In the dissipative contraction $\mathcal{D}[\delta t]$, each of the $L$ site tensors has to be contracted with a single site tensor $D_\ell \in \mathbb{C}^{d,d}$, which requires $\mathcal{O}(d^2\chi^2_{\text{max}})$ operations, leading to a complexity of $\mathcal{O}(Ld^2\chi^2_{\text{max}})$ for the dissipative contraction.

Collecting the above considerations, the total complexity of the TJM is given as 
\begin{equation}
    \begin{split}
        \mathcal{O} \bigg( & N n \bigl[ L (d^2 D^2 \chi_{\text{max}}^2  + d D \chi_{\text{max}}^3  + d^2 \chi_{\text{max}}^3 \\ 
        & + d^2 \chi^2_{\text{max}}) + k d \chi_{\text{max}}^3   + d^2 \chi^2_{\text{max}} \bigr] \bigg).
    \end{split}
\end{equation}
Dominant terms are the squared physical dimension $d$, the cubic bond dimension $\chi_{\text{max}}$ of the MPS in the TDVP sweep and the product $dD$ of the physical dimension and the bond dimension of the MPO, such that the shorthand complexity of TJM is 
 \begin{equation}
    \begin{split}
        \mathcal{O} \bigg( & N n L \chi_{\text{max}}^3 \bigl[d D  + d^2\bigr] \bigg).
    \end{split}
\end{equation}
\end{document}